\newtcolorbox[auto counter]{game}[2]{
	label=#1,
	title=Box \thetcbcounter #2,
	colback=blue!10,
	colframe=blue!50!black
}
\newtheorem{theorem}{Theorem}
\newtheorem{lemma}{Lemma}
\newtheorem{corollary}{Corollary}
\newtheorem{question}{Question}
\newcommand{\lket}[1]{\vert #1 \rangle\!\rangle}
\newcommand{\lbra}[1]{\langle\!\langle #1 \vert}
\newcommand{\lbraket}[2]{\langle\!\langle #1 \vert #2 \rangle\!\rangle}
\begin{document}

\title{Optimizing Circuit Reusing and its Application in Randomized Benchmarking}

\author{Zhuo Chen}\thanks{These authors contributed equally to this work.}
\author{Guoding Liu}\thanks{These authors contributed equally to this work.}
\author{Xiongfeng Ma}
\email{xma@tsinghua.edu.cn}
\affiliation{Center for Quantum Information, Institute for Interdisciplinary Information Sciences, Tsinghua University, Beijing, 100084 China}

\begin{abstract}
Quantum learning tasks often leverage randomly sampled quantum circuits to characterize unknown systems. An efficient approach known as ``circuit reusing,'' where each circuit is executed multiple times, reduces the cost compared to implementing new circuits. This work investigates the optimal reusing times that minimizes the variance of measurement outcomes for a given experimental cost. We establish a theoretical framework connecting the variance of experimental estimators with the reusing times $R$. An optimal $R$ is derived when the implemented circuits and their noise characteristics are known. Additionally, we introduce a near-optimal reusing strategy that is applicable even without prior knowledge of circuits or noise, achieving variances close to the theoretical minimum. To validate our framework, we apply it to randomized benchmarking and analyze the optimal $R$ for various typical noise channels. We further conduct experiments on a superconducting platform, revealing a non-linear relationship between $R$ and the cost, contradicting previous assumptions in the literature. Our theoretical framework successfully incorporates this non-linearity and accurately predicts the experimentally observed optimal $R$. These findings underscore the broad applicability of our approach to experimental realizations of quantum learning protocols.
\end{abstract}

\maketitle

\section{Introduction}\label{sc:intro}
The development of large-scale and functional quantum computers requires the efficient characterization of quantum systems, a task often referred to as ``quantum learning''~\cite{cross2015quantum,huang2022quantum, huang2020predicting}. This includes estimating observables of quantum states~\cite{paris2004quantum, adamson2010improving, Aaronson2018shadow} and assessing noise in quantum channels~\cite{wiebe2014hamiltonian, flammia2020efficient, flammia2021pauli}. A wealth of techniques within quantum learning have been proposed with the aim of efficiently evaluating the properties of quantum states and channels. Among these, the ``randomization'' approach is pivotal, where outcomes from random circuits are averaged to derive desired quantities. This approach involves sampling random quantum states~\cite{urbina2013random_state}, employing random gate sequences~\cite{RB2005, RB2008, Classical_shadow}, and implementing randomized measurements~\cite{Elben2018Random, Tiff2019randomized, elben2023randomized_meas}.

Within the context of quantum learning tasks, minimizing resource consumption, such as the number of implemented circuits and experimental duration, is crucial. A viable strategy to achieve this is ``circuit reusing'', where each randomly sampled circuit is used multiple times, including state preparation, gate implementation, and measurement. This strategy is extensively applied in the experimental execution of quantum learning tasks, such as classical shadow \cite{Huggins2022} and randomized benchmarking \cite{RB_with_confidence}. In experiments, initializing a new random circuit generally consumes more time than reusing an existing one due to necessary steps like random sampling, quantum gate decomposition, and other hardware preparations. Therefore, employing a circuit reusing strategy can reduce experimental costs while maintaining a reasonable data sample volume.

Theoretically, increasing the number of different circuits can enhance randomness and, thereby, improve estimation precision \cite{huang2020predicting}. This introduces a trade-off between cost savings and improved estimation accuracy. An important question emerges: what is the optimal number of times a circuit should be reused to minimize the estimation variance for a given experimental cost? This issue has been explored in the context of classical shadows~\cite{Thrifty_shadow, zhou2023performance}, analyzed in a noiseless scenario, with a cost model suggesting that executing a circuit $R$ times incurs a cost $t(R)=R+\alpha$, where $\alpha$ represents the fixed overhead associated with initializing a new circuit.

When considering the reusing strategy in broader quantum learning tasks and practical scenarios, several challenges arise. Firstly, unavoidable circuit noise must be considered in the assessment of the optimal reusing times, particularly for tasks aimed at noise benchmarking~\cite{RB2008,RB2012}. Furthermore, experimental data indicates that the assumption of a constant overhead $\alpha$ is unrealistic, necessitating a more universal cost model applicable to diverse quantum learning tasks.

In this study, we examine how the number of reusing times, $R$, influences the variance of outcomes in diverse quantum learning tasks. We introduce a cost model, $T(N,R) = Nt(R)$, to describe the cost of implementing $N$ distinct circuits, each reused $R$ times, where $t(R)$ varies as a function of $R$. With this model, we formulate an optimization issue that aims to minimize the variance of the target estimator as a function of the reusing times $R$ for a fixed experimental budget.

Employing the Law of Total Variance, we decompose the variance of the estimator into two components and establish the relationship between these components and the reusing times $R$. This analysis enables us to identify the optimal $R$ that minimizes variance for various cost models $t(R)$ and across different quantum learning tasks. While the value of the optimal reusing times depends on the experimental configuration, implemented circuits, and their specific noise channel characteristics, we introduce a practical ``near-optimal'' solution that requires no prior knowledge of the noise channel or the executed circuits. We prove that the variance of the near-optimal solution does not exceed a constant multiple of the minimal achievable variance given a bound on $t(R)$. Notably, both the optimal and the near-optimal strategies are versatile and applicable to various quantum learning tasks such as randomized benchmarking (RB) \cite{RB2005, RB2008, RB2012}, shadow estimation \cite{Classical_shadow}, direct fidelity estimation \cite{da2011practical}, and quantum tomography \cite{chuang1997prescription, poyatos1997complete}. Additionally, the principles underlying this solution can enhance other techniques that utilize the ``circuit reusing'' strategy, including quantum error mitigation \cite{cai2023quantum_error_mitigation} and entanglement detection \cite{entanglement}.

As an application of our findings, we explore the reusing strategy within the standard RB protocol \cite{RB2012} and identify the optimal reusing times $R^*$ with numerical simulations across typical noise channels. Moreover, we conduct $2$-qubit standard RB experiments with different reusing times $R$ ranging from $2$ to $20000$ on a superconducting quantum computing platform~\cite{Shaowei2022CZ}. Our experimental findings reveal that the simplest linear cost model $t(R)=R+\alpha$ does not hold. This observation leads us to develop a tailored cost model specific to our experimental setup. In experiments, the optimal reusing times $R^*$ ranges from $100$ to $500$, aligning closely with our theoretically predicted optimal solution of $200$. Furthermore, our near-optimal solution, with $R_0=333$, exhibits a variance below $2$ times the minimal variance observed experimentally, confirming our theoretical predictions.

The structure of this paper is organized as follows. Section \ref{sec:2opt} introduces our analytical framework and explores the optimization of reusing times $R$ under various experimental conditions. Section \ref{sc:theory} applies our findings to RB and presents corresponding numerical simulations across various noise channels. Section \ref{sc:Exp} details the experimental demonstration conducted on a superconducting quantum computing platform. Finally, we conclude in Section~\ref{sc:conclude} with a summary of our results and implications for future research.

\section{Optimize circuit reusing in quantum learning tasks}\label{sec:2opt}
\subsection{Formulation of the optimization issue for circuit reusing times}
The task of quantum learning aims to assess the property of quantum states or quantum channels via measurement results from multiple circuits. When the protocol of the target task involves one-shot measurement across distinct circuits, it is feasible to perform repeated measurements to reuse circuits. Below, we analyze the relationship between the estimation variance and the reusing times.

As depicted in Fig.~\ref{fig:flowchart}, let us outline the steps involved in a quantum learning task: We execute $N$ random circuits, with each circuit potentially involving distinct state $\rho_i$, gate sequence $\vec{G}_i$, and measurement $Q_i$. Implementation of each circuit is repeated $R$ times. For the $i$-th circuit, the $r$-th single-shot measurement result is represented as a random variable, denoted by $X_r(\rho_i, Q_i, \vec{G}_i)$. The target estimator $\Bar{X}_N(R)$ is then defined as
\begin{equation}
\begin{split}
\Bar{X}_N(R) &= \frac{1}{N} \sum_i \Bar{X}_R(\rho_i, Q_i, \vec{G}_i),\\
\Bar{X}_R(\rho_i, Q_i, \vec{G}_i) &= \frac{1}{R}\sum_r X_r(\rho_i, Q_i, \vec{G}_i).
\end{split}
\end{equation}
More generally, $X_r(\rho_i, Q_i, \vec{G}_i)$ can be a value of the $r$-th single-shot measurement result after classical post-processing. For our further derivation, we only require the independent and identically distributed condition for $X_r(\rho_i, Q_i, \vec{G}_i)$ over different $r$ and for $\Bar{X}_R(\rho_i, Q_i, \vec{G}_i)$ over different random gate sequences. This is ensured by the uniform and random sampling for gate sequences and the independence among different single-shot measurement results.

\begin{figure}[htbp!]
\centering
\includegraphics[width=15cm]{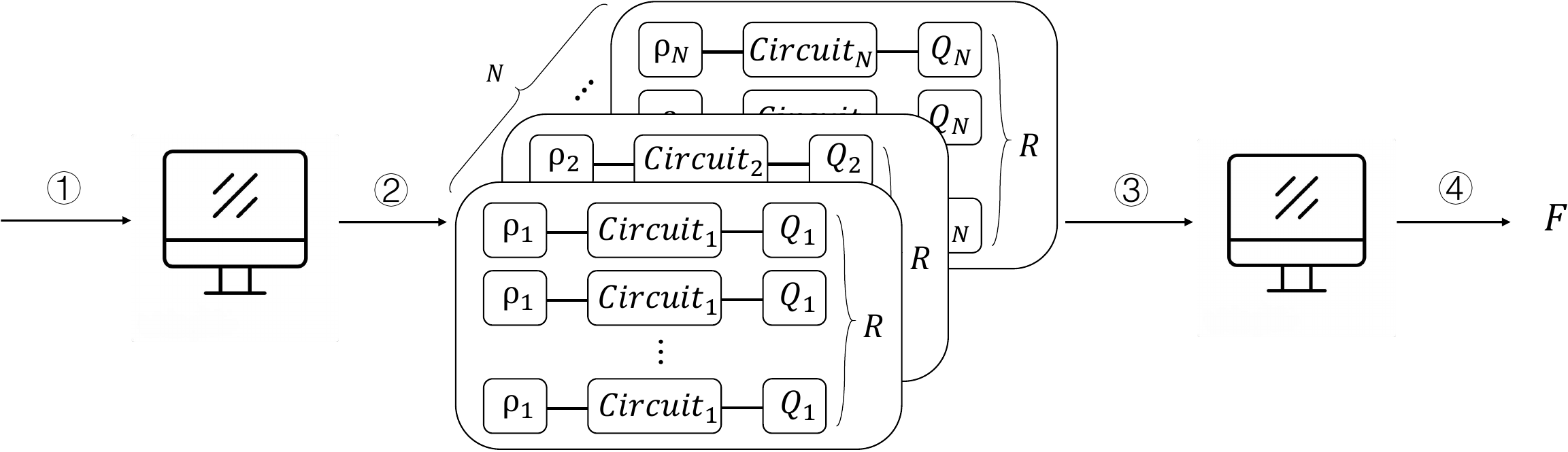}
\caption{Procedure of a quantum learning task utilizing circuit reusing: \large{\textcircled{\small{1}}}\normalsize Input the quantum learning objective and the prior knowledge about quantum device performance and implemented circuits to a classical computer for analysis. \large{\textcircled{\small{2}}}\normalsize Design the learning protocol, setting parameters such as the reusing times $R$ and the number of unique circuits $N$, and generate the corresponding quantum circuits. \large{\textcircled{\small{3}}}\normalsize Execute quantum operations and measurements to gather data for estimating the target quantity. \large{\textcircled{\small{4}}}\normalsize Analyze the collected data using classical post-processing to derive the desired quantity.}
\label{fig:flowchart}
\end{figure}

To identify the optimal value of $R$ to minimize the variance of $\Bar{X}_N(R)$, we develop the subsequent optimization model. Suppose implementing each circuit $R$ times consumes cost $t(R)$, which typically pertains to time expense, the total cost $T_0$ for acquiring $NR$ samples can be described as
\begin{equation}\label{alpha}
T_0=Nt(R).
\end{equation}
The situation can further be generalized to the case that $t(R)$ also depends on $N$. With this foundation, we formulate the optimization question as follows.

\begin{question}\label{ques:opt}
Given a predetermined total cost $T_0$, the objective is to determine the optimal reusing times $R$ that minimizes the variance of the estimator $\Bar{X}_N$. Mathematically,
\begin{equation}
\begin{split}
&\mathop{\arg\min}\limits_{R} \mathbb{V}(\Bar{X}_N) \\
&s.t.\ Nt(R) = T_0.
\end{split}
\end{equation}
\end{question}

In previous works~\cite{Thrifty_shadow, zhou2023performance}, an optimization issue was analyzed for the strategy of circuit reusing within the framework of randomized shadow estimation, assuming a linear cost model of $t(R) = R + \alpha$, where $\alpha$ is a positive constant representing the extra cost incurred when initializing a new circuit compared to reusing an existing one. In contrast, our approach accommodates more general scenarios. Observations from RB experiments suggest that the linear model may not be applicable. This variability is substantiated by runtime data from our RB experiments conducted on a superconducting quantum computing platform, which we detail in Section \ref{sc:Exp}. Consequently, our optimization issue, as outlined in Question~\ref{ques:opt}, requires a more universal expression for $t(R)$.

Assuming that the circuits $\{\rho_i, Q_i, \vec{G}_i\}$ are uniformly and independently determined, the variance of the estimator is given by:
\begin{equation}
\mathbb{V}(\Bar{X}_N(R))=\frac{1}{N}\mathbb{V}(\Bar{X}_R(\rho_i, Q_i, \vec{G}_i)),
\end{equation}
where $\Bar{X}_R$ is the average over $R$ repeated measurements. Applying the Law of Total Variance, we decompose the variance of $\Bar{X}_R$ as follows:
\begin{equation} \label{proj_measurement}
\begin{split}
\mathbb{V}(\Bar{X}_R) &= \mathbb{E}_{i}(\mathbb{V}(\Bar{X}_R|(\rho_i, Q_i, \vec{G}_i))) + \mathbb{V}_{i}(\mathbb{E}(\Bar{X}_R|(\rho_i, Q_i, \vec{G}_i)))\\
&= \frac{1}{R}\mathbb{E}_{i}(\mathbb{V}(X_r|(\rho_i, Q_i, \vec{G}_i))) + \mathbb{V}_{i}(\mathbb{E}(X_r|(\rho_i, Q_i, \vec{G}_i))).
\end{split}
\end{equation}
Thus, the variance of $\Bar{X}_N(R)$ is expressed as:
\begin{equation}\label{eq:variance}
\begin{split}
\mathbb{V}(\Bar{X}_N(R))=\frac{t(R)}{T_0}\left(\frac{Y}{R}+Z\right),
\end{split}
\end{equation}
where \begin{equation}\label{eq:coefYZ}
\begin{split}
Y&=\mathbb{E}_{i}(\mathbb{V}(X_r|(\rho_i, Q_i, \vec{G}_i))),\\
Z&=\mathbb{V}_{i}(\mathbb{E}(X_r|(\rho_i, Q_i, \vec{G}_i))).
\end{split}
\end{equation}
Both coefficients $Y$ and $Z$, related to the statistical properties of the measurement result $X_r$, remain constant relative to $R$. In the subsequent analysis, we determine the optimal reusing times $R$ based on whether the coefficients $Y$ and $Z$ are known or unknown.

\subsection{Optimizing circuit reusing with circuit and noise information}\label{subsec:R*}
When detailed information about the noise channel and the implemented circuits is available, it becomes feasible to precisely determine the coefficients $Y$ and $Z$ and subsequently calculate the optimal reusing times.
Assuming the cost of implementing each circuit $R$ times follows the model $t(R) = \alpha + \beta R$, where $\alpha$ and $\beta$ are positive constants. This cost model, referred to as the ``constant-cost model,'' indicates that initializing a new circuit incurs an additional cost $\alpha$, whereas reusing an existing one costs $\beta$ per reuse. Under this model, the optimal reusing times, denoted by $R^*$, can be determined as follows.

\begin{theorem}\label{thm:optimalR}
If the cost of implementing a circuit $R$ times is modeled as $t(R) = \alpha+\beta R$, with $\alpha,\beta $ being constant, the optimal reusing times $R^*$ that minimizes the variance is given by
\begin{equation}\label{eq:R*YZ}
R^* = \sqrt{\frac{\alpha Y}{\beta Z}}.
\end{equation}
\end{theorem}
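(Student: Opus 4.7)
The plan is to substitute the constant-cost model $t(R)=\alpha+\beta R$ directly into the variance formula already derived in Eq.~(6), which reduces the theorem to a one-variable unconstrained minimization. Since $T_0$ is fixed and the constraint $Nt(R)=T_0$ has already been absorbed into Eq.~(6), I do not need to carry $N$ as a separate variable: the dependence on $R$ appears only through the prefactor $t(R)/T_0$ and the term $Y/R+Z$.

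Substituting the cost model yields
\begin{equation*}
\mathbb{V}(\bar{X}_N(R)) = \frac{1}{T_0}\,(\alpha+\beta R)\!\left(\frac{Y}{R}+Z\right) = \frac{1}{T_0}\!\left(\frac{\alpha Y}{R} + \beta Z\,R + \alpha Z + \beta Y\right).
\end{equation*}
The last two summands are independent of $R$, so minimizing the variance amounts to minimizing $f(R)=\alpha Y/R+\beta Z\,R$ over $R>0$. I would finish the proof in either of two equivalent ways. Setting $f'(R)= -\alpha Y/R^2+\beta Z=0$ gives the unique positive critical point $R^*=\sqrt{\alpha Y/(\beta Z)}$, and $f''(R)=2\alpha Y/R^3>0$ confirms it is a minimum. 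Alternatively, the AM--GM inequality gives $\alpha Y/R+\beta Z\,R\geq 2\sqrt{\alpha\beta YZ}$, with equality exactly when $\alpha Y/R=\beta Z\,R$, i.e.\ at $R=\sqrt{\alpha Y/(\beta Z)}$.

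There is no genuine obstacle here; the only subtlety worth a remark in the paper is that $R$ is treated as a continuous variable, while in practice it must be a positive integer, so the true optimum is one of $\lfloor R^*\rfloor$ or $\lceil R^*\rceil$. Since $f$ is convex with a unique continuous minimizer, this rounding incurs at most a negligible loss and does not affect the scaling stated in Eq.~(7). The coefficients $Y$ and $Z$ defined in Eq.~(7') are, by construction, independent of $R$, which is what makes the above elementary optimization legitimate; this is the only structural fact from the earlier derivation that the argument relies on.
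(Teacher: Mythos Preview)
Your proposal is correct and matches the paper's own argument: the paper substitutes $t(R)=\alpha+\beta R$ into Eq.~(6) and invokes the inequality $(\alpha+\beta R)(Y/R+Z)\geq(\sqrt{\beta Y}+\sqrt{\alpha Z})^2$ with equality at $R=\sqrt{\alpha Y/(\beta Z)}$, which is exactly your AM--GM route. Your remark about rounding to an integer is also mirrored in the paper immediately after the theorem.
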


The above theorem can be proved by an inequality:
\begin{equation}\label{eq:optvariance}
\begin{split}
\mathbb{V}(\Bar{X}_N(R)) &= \frac{\alpha+\beta R}{T_0}\left(\frac{Y}{R}+Z\right) \\
&\geq \frac{1}{T_0}\left(\sqrt{\beta Y}+\sqrt{\alpha Z}\right)^2,
\end{split}
\end{equation}
where the equality holds when $R = \sqrt{\alpha Y/\beta Z}$. 
From Eq.~\eqref{eq:optvariance}, it is clear that the improvement offered by the optimal strategy, compared to naively choosing $R = 1$, depends on the specific values of $Y$ and $Z$. In the extreme case, for example when $Y\gg Z$ and $\alpha Z\gg \beta Y$, the speedup factor of optimal strategy is given by:
\begin{equation}
\begin{split}
    \frac{\mathbb{V}(\Bar{X}_N(R=1))}{\mathbb{V}(\Bar{X}_N(R^*))}=\frac{(\alpha+\beta)(Y+Z)}{\beta Y+\alpha Z+1\sqrt{\alpha\beta YZ}}\approx \frac{Y}{Z}\gg 1.
\end{split}
\end{equation}
Besides, since the reusing times must be an integer, the actual optimal reusing times takes value \begin{equation}
R^*=\mathop{\arg\min}\limits_{R\in\{\lceil\sqrt{\frac{\alpha Y}{\beta Z}}\rceil,\lfloor\sqrt{\frac{\alpha Y}{\beta Z}}\rfloor\}}\mathbb{V}\left(\Bar{X}_N\left(R\right)\right).
\end{equation}
In future discussions, we will default on this point without ambiguity.

The conclusion of Theorem~\ref{thm:optimalR} can be extended to other practical cases. Here, as an example, we take the form of $t(R)$ as
\begin{equation}
t(R) = C_1\lceil R/R_c\rceil+C_2,
\end{equation}
which is a time cost model that we observed in experiments as elaborated in Section~\ref{sc:Exp}. The terms $C_1$ and $C_2$ are both positive. In this case, the variance is given by
\begin{equation}
\mathbb{V}(\Bar{X}_N(R)) = \frac{C_1\lceil R/R_c\rceil+C_2}{T_0}\left(\frac{Y}{R}+Z\right).
\end{equation}
A direct calculation indicates that in this case, the optimal reusing times takes value
\begin{equation}
R^*=\mathop{\arg\min}\limits_{R\in\{\lceil\sqrt{\frac{C_2Y}{C_1ZR_c}}\rceil R_c,\lfloor\sqrt{\frac{C_2Y}{C_1ZR_c}}\rfloor R_c\}}\mathbb{V}\left(\Bar{X}_N\left(R\right)\right).
\end{equation}
More general cases of $t(R)$, including only the bound instead of concrete forms of $t(R)$ known, are discussed in Appendix~\ref{appendsc:opt}.


\subsection{Optimizing circuit reusing without circuit and noise information}
While the above analyses show that the solution to Question~\ref{ques:opt} depends on the values of $Y$ and $Z$, the two coefficients depend on both the distribution of the executed circuits and the specific noise characteristics of quantum operations. In practice, fully characterizing noise can be challenging, rendering the precise values of $Y$ and $Z$ often unavailable and making it difficult to pinpoint the exact optimal reusing times $R^*$. However, the cost model $t(R)$, typically determined by analyzing experimental runtime, is usually accessible. With this information at hand, it is feasible to choose a value of $R$ that ensures the variance remains below a threshold.

Let us consider a scenario where we have prior knowledge of the cost bounds for implementing circuits multiple times:
\begin{equation}\label{eq:costbound}
\alpha_l+\beta_l R \leq t(R) \leq \alpha_u+\beta_u R,\ \forall R \in \mathbb{Z}^+,
\end{equation}
where $\alpha_l$, $\alpha_u$, $\beta_l$, and $\beta_u$ are positive constants, indicating that the actual cost model $t(R)$ falls between two such constant-cost models. Given this bounded cost model, we propose a near-optimal solution for Question \ref{ques:opt}, which does not require prior knowledge of circuit specifics or noise characteristics.




\begin{theorem}\label{thm:2opt}
In a quantum learning task that averages across various circuits represented as $\{\rho_i, Q_i, \vec{G}_i\}$, the calculated value $\Bar{X}_N(R)=\frac{1}{NR} \sum_{i,r} X_r(\rho_i, Q_i, \vec{G}_i)$ is derived after reusing each of the $N$ distinct circuits $R$ times, where $X_r$ represents the $r$-th one-shot measurement result. Assume that the total time cost satisfies $T_0 = Nt(R)$ with $\alpha_l+\beta_l R \leq t(R) \leq \alpha_u+\beta_u R$. Within fixed total time cost, we have 
\begin{equation}
\mathbb{V}(\Bar{X}_N(R_0))\leq \left(\frac{\alpha_u}{\alpha_l}+\frac{\beta_u}{\beta_l}\right)\mathbb{V}(\Bar{X}_N(R^*)),
\end{equation}
where $R_0=\frac{\alpha_l}{\beta_l}$, and $R^*$ represents the optimal reusing times that minimizes $\mathbb{V}(\Bar{X}_N(R))$.
\end{theorem}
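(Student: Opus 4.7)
\medskip

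\noindent\textbf{Proof proposal.} The plan is to bound $\mathbb{V}(\Bar{X}_N(R_0))$ from above and $\mathbb{V}(\Bar{X}_N(R^*))$ from below, then take the ratio and show that the $Y,Z$-dependence cancels. Equation~\eqref{eq:variance} already writes $\mathbb{V}(\Bar{X}_N(R))=t(R)(Y/R+Z)/T_0$, so the whole argument reduces to manipulating the prefactor $t(R)$ using its two-sided linear bounds in Eq.~\eqref{eq:costbound}, while the $(Y/R+Z)$ factor is treated exactly.

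For the lower bound on $\mathbb{V}(\Bar{X}_N(R^*))$, I would use $t(R)\geq \alpha_l+\beta_l R$ for every $R$ to obtain
\begin{equation*}
\mathbb{V}(\Bar{X}_N(R))\;\geq\;\frac{\alpha_l+\beta_l R}{T_0}\!\left(\frac{Y}{R}+Z\right)\;\geq\;\frac{\bigl(\sqrt{\beta_l Y}+\sqrt{\alpha_l Z}\bigr)^2}{T_0},
\end{equation*}
where the second step is exactly the AM--GM inequality used in the proof of Theorem~\ref{thm:optimalR} with $(\alpha,\beta)=(\alpha_l,\beta_l)$. Since the right-hand side is independent of $R$, it in particular lower-bounds $\mathbb{V}(\Bar{X}_N(R^*))$.

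For the upper bound I plug in $R_0=\alpha_l/\beta_l$ and use $t(R_0)\leq \alpha_u+\beta_u R_0$. The key algebraic observation is that this specific choice of $R_0$ makes the product factor cleanly: collecting the $Y$-coefficient of $(\alpha_u+\beta_u R_0)(Y/R_0+Z)$ gives $\alpha_u\beta_l/\alpha_l+\beta_u=\beta_l(\alpha_u/\alpha_l+\beta_u/\beta_l)$, and collecting the $Z$-coefficient gives $\alpha_u+\beta_u\alpha_l/\beta_l=\alpha_l(\alpha_u/\alpha_l+\beta_u/\beta_l)$. Hence
\begin{equation*}
(\alpha_u+\beta_u R_0)\!\left(\frac{Y}{R_0}+Z\right)=\Bigl(\frac{\alpha_u}{\alpha_l}+\frac{\beta_u}{\beta_l}\Bigr)(\beta_l Y+\alpha_l Z).
\end{equation*}

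Dividing the upper bound by the lower bound and using $\bigl(\sqrt{\beta_l Y}+\sqrt{\alpha_l Z}\bigr)^2=\beta_l Y+\alpha_l Z+2\sqrt{\alpha_l\beta_l YZ}\geq \beta_l Y+\alpha_l Z$ makes the $Y,Z$-dependent factor at most $1$, leaving exactly the claimed constant $\alpha_u/\alpha_l+\beta_u/\beta_l$. The main conceptual obstacle is spotting this particular choice $R_0=\alpha_l/\beta_l$: for a generic $R_0$ the ratio retains a nontrivial $Y/Z$-dependence and no universal constant bound would follow. With this choice, however, the upper bound at $R_0$ and the AM--GM lower bound pair up so that the $Y,Z$-dependent piece cancels, giving a bound that requires no knowledge of the noise channel or the implemented circuits.
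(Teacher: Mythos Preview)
Your proposal is correct and follows essentially the same approach as the paper: bound $\mathbb{V}(\Bar{X}_N(R^*))$ below via $t(R)\geq\alpha_l+\beta_l R$, bound $\mathbb{V}(\Bar{X}_N(R_0))$ above via $t(R_0)\leq\alpha_u+\beta_u R_0$, and observe that the factor $\beta_l Y+\alpha_l Z$ cancels in the ratio. The only cosmetic difference is that the paper drops the two nonnegative cross terms $\alpha_l Y/R+\beta_l R Z$ directly to get $\mathbb{V}(\Bar{X}_N(R^*))\geq(\beta_l Y+\alpha_l Z)/T_0$, whereas you first pass through the tighter AM--GM bound $(\sqrt{\beta_l Y}+\sqrt{\alpha_l Z})^2/T_0$ and then relax it; the end result and the logic are the same.
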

\begin{proof}
Through the variance form in Eq.~\eqref{eq:variance}, we obtain that 
\begin{equation}
\begin{split}
\mathbb{V}\left(\Bar{X}_N(R^*)\right)&\geq\min_R\left(\frac{\beta_l R+\alpha_l}{T_0}\left(\frac{Y}{R}+Z\right)\right)\\
&\geq \frac{\beta_l Y+\alpha_l Z}{T_0}.\\
\mathbb{V}\left(\Bar{X}_N\left(\frac{\alpha_l}{\beta_l}\right)\right)&\leq \frac{\beta_u \frac{\alpha_l}{\beta_l}+\alpha_u}{T_0}\left(\frac{Y\beta_l}{\alpha_l}+Z\right)\\
&= \left(\frac{\alpha_u}{\alpha_l}+\frac{\beta_u}{\beta_l}\right)\frac{\beta_l Y+\alpha_l Z}{T_0},\\
\end{split}
\end{equation}
By comparing the two variances, we prove our result.
\end{proof}

The following text refers to the near-optimal solution as $R_0$. When we return to the case of the constant-cost model, it is straightforward to develop a relaxed version as outlined below.

\begin{corollary}\label{cor:2opt_constant}
When the cost to implement each circuit $R$ times is given by $t(R) = \alpha + \beta R$ with $\alpha$ and $\beta$ being positive constants, 
$R_0=\frac{\alpha}{\beta}$ deems as a $2$-optimal solution to Question~\ref{ques:opt}, i.e. \begin{equation}
\mathbb{V}(\Bar{X}_N(R_0))\leq 2\mathbb{V}(\Bar{X}_N(R^*)).
\end{equation}
\end{corollary}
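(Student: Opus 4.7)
The plan is to observe that Corollary~\ref{cor:2opt_constant} is essentially a direct specialization of Theorem~\ref{thm:2opt} to the degenerate case in which the lower and upper bounds on $t(R)$ coincide with the actual cost function. Concretely, when $t(R) = \alpha + \beta R$ exactly, the bounds in Eq.~\eqref{eq:costbound} can be saturated by choosing $\alpha_l = \alpha_u = \alpha$ and $\beta_l = \beta_u = \beta$. This turns the generic prefactor $\alpha_u/\alpha_l + \beta_u/\beta_l$ appearing in Theorem~\ref{thm:2opt} into $1 + 1 = 2$, and identifies the recommended reusing times as $R_0 = \alpha_l/\beta_l = \alpha/\beta$.

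Concretely, I would write one short sentence invoking Theorem~\ref{thm:2opt} with these parameter choices, then display the resulting inequality $\mathbb{V}(\Bar{X}_N(R_0)) \leq 2\,\mathbb{V}(\Bar{X}_N(R^*))$ directly, with $R_0 = \alpha/\beta$. No additional estimates or optimization are needed; the variance bound in the proof of Theorem~\ref{thm:2opt} already handled the comparison $\mathbb{V}(\Bar{X}_N(R^*)) \geq (\beta Y + \alpha Z)/T_0$ and the evaluation $\mathbb{V}(\Bar{X}_N(\alpha/\beta)) = 2(\beta Y + \alpha Z)/T_0$.

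There is essentially no obstacle here; the main thing to be careful about is making explicit that the assumption $\alpha_l + \beta_l R \leq t(R) \leq \alpha_u + \beta_u R$ in Theorem~\ref{thm:2opt} is trivially satisfied (with equality) by the choice $\alpha_l = \alpha_u = \alpha$ and $\beta_l = \beta_u = \beta$, so that Theorem~\ref{thm:2opt} applies verbatim. Optionally, one could also remark that the integer-valued nature of $R$ means $R_0$ should be interpreted as the nearest integer to $\alpha/\beta$, paralleling the rounding comment made after Theorem~\ref{thm:optimalR}; this only affects the constant by a negligible amount and the factor of $2$ remains valid.
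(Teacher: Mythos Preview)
Your proposal is correct and matches the paper's approach: the paper states Corollary~\ref{cor:2opt_constant} immediately after Theorem~\ref{thm:2opt} as a ``relaxed version'' for the constant-cost model, with no separate proof, precisely because it follows by the specialization $\alpha_l=\alpha_u=\alpha$, $\beta_l=\beta_u=\beta$ that you describe.
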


Note that the value of the optimal $R^*$ may significantly differ from the $\left(\frac{\alpha_u}{\alpha_l}+\frac{\beta_u}{\beta_l}\right)$-optimal value $R_0=\frac{\alpha_l}{\beta_l}$ in certain scenarios. Nonetheless, the balanced strategy $R_0$ in Theorem~\ref{thm:2opt} ensures that the overall variance remains within acceptable limits, which is particularly advantageous in situations where we have no information about the form of the noise channel. More importantly, the applicability of this theorem encompasses various quantum learning and information processing methodologies employing random circuit and reusing techniques, even containing cases where the underlying sampled circuits are unknown~\cite{Elben2018Random, Tiff2019randomized}.

\section{Optimal reusing times in standard RB}\label{sc:theory}
\subsection{Standard RB with circuit reusing}
Here, we apply our results on optimal reusing times to RB~\cite{RB2005}, an efficient tool for assessing the quality of quantum gates, robust against state preparation and measurement (SPAM) errors. We focus on the standard RB protocol~\cite{RB2008, RB2012}, which evaluates the average fidelity of gates in a $2$-design gate set such as the Clifford group~\cite{dankert2009exact, RB2012}. The fidelity of a gate $G$, denoted as $\mathcal{G}(\rho) = G\rho G^\dagger$, is defined by:
\begin{equation}
\begin{split}
F_{avg}(\Tilde{\mathcal{G}}, \mathcal{G}) := \int d\phi \, \Tr(\mathcal{G}(|\phi\rangle \langle \phi|)\Tilde{\mathcal{G}}(|\phi\rangle \langle \phi|)),
\end{split}
\end{equation}
where $\Tilde{\mathcal{G}}$ is the noisy implementation of $\mathcal{G}$ and $d\phi$ represents the uniform Haar measure over pure quantum states.

In RB protocols, the noise models of quantum gates are typically assumed to be Markovian and gate-independent~\cite{general}. Here, we adhere to this assumption, suggesting that the practical implementation of a gate $\mathcal{G}_i$ from the gate set $\mathbf{G}$ is represented by $\tilde{\mathcal{G}_i} = \Lambda\mathcal{G}_i$, where $\Lambda$ is a completely positive trace-preserving map acting as post-gate noise. Given that $\mathcal{G}$ is unitary, $F_{avg}(\Tilde{\mathcal{G}}, \mathcal{G})$ equals $F_{avg}(\Lambda, \mathcal{I})$, and we denote it as $F_{avg}(\Lambda)$ in subsequent discussions. Additional details on the RB process and related mathematical tools are presented in Appendix~\ref{app_Rep}.

In short, to estimate the average fidelity of the gate set $\mathbf{G}$ by standard RB, we sample $N$ different gate sequences $\vec{G_i}=\{G_1,G_2,\ldots,G_m\}$ with length $m$, together with a global inverse gate $G_{inv}=G_1^\dagger G_2^\dagger\ldots G_m^\dagger$ for each gate sequence. Then we apply these gates to an initial state $\rho$, and implement positive operator-valued measure (POVM) measurement $\{Q,\mathbb{I}-Q\}$ on the final state \begin{equation}\label{final_state}
\rho_f=\mathcal{G}_{inv}\Lambda\mathcal{G}_m\ldots\mathcal{G}_2\Lambda\mathcal{G}_1(\rho).
\end{equation}
Here, we adopt the convention of ignoring the noise channel of the inverse gate, as in other RB literature. This change does not influence the analysis of obtaining fidelity. One can view this omitted noise channel to be absorbed by measurements.
With circuit reusing strategy, we repeat implementing the circuit $(\rho,Q,\vec{G}_{i})$ $R$ times with fixed $\rho$ and $Q$. The $r$-th measurement result can be represented by a random variable \begin{equation}\label{eq:RB_rv}
X_r(m,\vec{G}_i)=\begin{cases}
0,&\text{w.p. }1-\Tr(Q\rho_f)\\
1,&\text{w.p. }\Tr(Q\rho_f)
\end{cases}.
\end{equation} 
Averaging over $R$ results, we obtain the survival probability associated with $\vec{G_i}$,
\begin{equation}
\Bar{X}_R(m,\vec{G}_i)=\frac{1}{R}\sum_{r=1}^RX_r(m,\vec{G}_i).
\end{equation}
Taking the average for $N$ different gate sequences, we obtain our final estimator, \begin{equation}
\Bar{X}_N(m)=\frac{1}{N}\sum_{i=1}^N\Bar{X}_R(m,\vec{G}_i).
\end{equation} 
The flowchart is shown in Fig.~\ref{RB_flowchart}.

\begin{figure}[htbp]
\centering
\includegraphics[width=10.2cm]{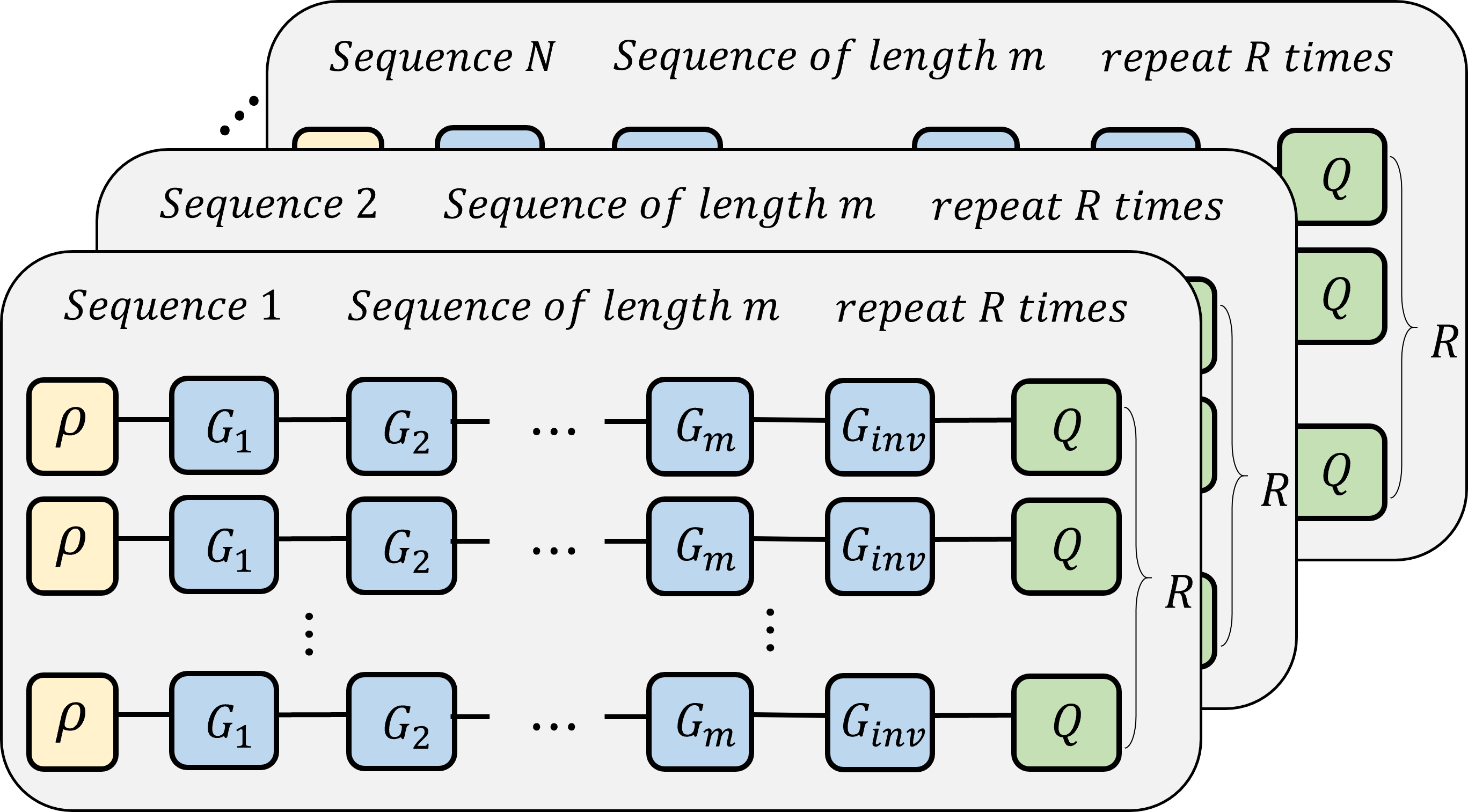}
\caption{The flowchart illustrates the process for the standard RB protocol with quantum circuits reusing. Each layer represents a gate sequence repeating $R$ times, and it requires $N$ different sequences. 
}
\label{RB_flowchart}
\end{figure}

Utilizing the $2$-design property~\cite{dankert2009exact,RB2012} of the target gate set in standard RB, the expectation value of the estimator $\Bar{X}_N(m)$ has a single exponential form, \begin{equation}
\mathbb{E}(\Bar{X}_N(m))=af(\Lambda)^m+b,
\end{equation}
where $a$ and $b$ are coefficients that absorb SPAM errors, $f(\Lambda)$ represents the quality parameter associated with the target average fidelity, \begin{equation}\label{eq:quality}
F_{avg}(\Lambda)=f(\Lambda)+\frac{1-f(\Lambda)}{d},
\end{equation}
where $d=2^n$ represents the system size.
By assigning different sequence lengths $m$, one could obtain quality parameter $f(\Lambda)$ by fitting, thus assessing the average fidelity $F_{avg}(\Lambda)$.

\subsection{Theoretical analyses}
Note that the variance of the fidelity estimation is positively correlated with the variance of the estimator $\Bar{X}_N(m)$ \cite{epstein2014investigating}, hence one can optimize the variance $\mathbb{V}(\Bar{X}_N(m))$ to gain more accurate estimation for fidelity. 
Based on Theorem~\ref{thm:optimalR}, to obtain the optimal reusing times $R^*$ to minimize $\mathbb{V}(\Bar{X}_N(m))$, we only need to evaluate the terms $Y$ and $Z$ in Eq.~\eqref{eq:coefYZ} according to the form of the measurement outcome $X_r$. In standard RB, $X_r$ takes 1 with probability $\Tr(Q\mathcal{G}_{inv}\Lambda\mathcal{G}_m\ldots\mathcal{G}_2\Lambda\mathcal{G}_1(\rho))$ and 0 otherwise, as shown in Eq.~\eqref{eq:RB_rv}. Applying Theorem~\ref{thm:optimalR} to standard RB, we obtain the following corollary.

\begin{corollary} \label{lemma1}
In standard RB with circuit reusing, the values of $Y$ and $Z$ as defined in Eq.~\eqref{eq:coefYZ} are computed by:
\begin{equation}
\begin{split}
Y &= A - B \\
Z &= B - A^2
\end{split}
\end{equation}
where:
\begin{equation}
\begin{split}
A&:=\mathbb{E}_{\vec{G}}\left(\lbra{Q}\mathcal{G}_{inv}\Lambda\mathcal{G}_m\ldots\mathcal{G}_2\Lambda\mathcal{G}_1\lket{\rho}\right)\\
&=f(\Lambda)^m\Tr(Q\rho)+\frac{1-f(\Lambda)^m}{d}\Tr(Q)\\
&=\frac{d\cdot\Tr(Q\rho)-\Tr(Q)}{d}\cdot \left(\frac{d\cdot F_{avg}(\Lambda)-1}{d-1}\right)^m+\frac{1}{d}\Tr(Q),\\
B&:=\mathbb{E}_{\vec{G}}\left(\lbra{Q}\mathcal{G}_{inv}\Lambda\mathcal{G}_m\ldots\mathcal{G}_2\Lambda\mathcal{G}_1\lket{\rho}^2\right).
\end{split}
\end{equation}
Furthermore, when the cost of implementing a circuit $R$ times is given by $t(R) = \alpha + \beta R$ with positive constants $\alpha$ and $\beta$, the optimal value of reusing times is determined by:
\begin{equation}
R^* = \sqrt{\frac{\alpha (A - B)}{\beta (B - A^2)}}.
\end{equation}
\end{corollary}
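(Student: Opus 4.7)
The plan is to read off the corollary almost immediately from the definitions, with the only nontrivial ingredient being the standard 2-design averaging that already underlies the single-exponential RB decay.

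First, I would exploit the Bernoulli structure of $X_r$. Fixing the circuit data $(\rho, Q, \vec{G}_i)$, Eq.~\eqref{eq:RB_rv} says $X_r$ is Bernoulli with success probability $p(\vec{G}_i) := \lbra{Q}\mathcal{G}_{inv}\Lambda\mathcal{G}_m\cdots\Lambda\mathcal{G}_1\lket{\rho}$. Hence conditioning on the circuit gives the elementary moments
\begin{equation}
\mathbb{E}(X_r \mid \vec{G}_i) = p(\vec{G}_i), \qquad \mathbb{V}(X_r \mid \vec{G}_i) = p(\vec{G}_i)\bigl(1 - p(\vec{G}_i)\bigr).
\end{equation}
Substituting these into the definitions of $Y$ and $Z$ in Eq.~\eqref{eq:coefYZ} and using linearity of expectation for the first factor and the identity $\mathbb{V}_i(p(\vec{G}_i)) = \mathbb{E}_i(p^2) - (\mathbb{E}_i p)^2$ for the second immediately yields $Y = A - B$ and $Z = B - A^2$, where $A$ and $B$ are exactly the first and second moments of $p(\vec{G}_i)$ over a uniformly random sequence, as defined in the statement.

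Next I would derive the closed form for $A$. This is the classical standard-RB computation: because the Clifford group is a unitary 2-design, the twirl $\mathbb{E}_{\vec G}(\mathcal{G}_{inv}\Lambda\mathcal{G}_m\cdots\Lambda\mathcal{G}_1)$ reduces to an iterated depolarizing channel with parameter $f(\Lambda)$, so that $\mathbb{E}_{\vec G}(\mathcal{G}_{inv}\Lambda\mathcal{G}_m\cdots\Lambda\mathcal{G}_1)(\rho) = f(\Lambda)^m \rho + \frac{1 - f(\Lambda)^m}{d}\mathbb{I}$. Taking the inner product with $Q$ gives the middle line of the displayed formula for $A$, and the third line follows from the conversion $f(\Lambda) = (d F_{avg}(\Lambda) - 1)/(d-1)$ in Eq.~\eqref{eq:quality}. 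I would point the reader to Appendix~\ref{app_Rep} for the 2-design bookkeeping rather than reproducing it inline, since this is the only step with any real content and it is standard.

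Finally, the optimal reusing times follows by direct application of Theorem~\ref{thm:optimalR}: substituting $Y = A - B$ and $Z = B - A^2$ into $R^* = \sqrt{\alpha Y / (\beta Z)}$ gives $R^* = \sqrt{\alpha(A - B)/(\beta(B - A^2))}$. The main obstacle, such as it is, lies in the 2-design evaluation of $A$, but this is textbook RB material; the variance decomposition into $Y = A - B$ and $Z = B - A^2$ is a routine Bernoulli identity. I would not attempt a closed form for $B$ here, since the corollary only requires $B$ to be identified as $\mathbb{E}_{\vec{G}}(p(\vec{G})^2)$; explicit evaluation under specific noise models is deferred to the later sections.
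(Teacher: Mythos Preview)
Your proposal is correct and essentially identical to the paper's proof: both exploit the Bernoulli structure of $X_r$ to get the conditional moments, substitute into the definitions of $Y$ and $Z$, and invoke the 2-design twirl to evaluate $A$. The only cosmetic difference is that the paper re-derives the AM-GM minimization in situ rather than citing Theorem~\ref{thm:optimalR}, whereas you invoke the theorem directly; your version is slightly cleaner.
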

\begin{proof}
By direct substitution of Eq.~\eqref{eq:RB_rv} into $Y$ and $Z$, we have 
\begin{equation}
\begin{split}
\mathbb{E}(X_r|\vec{G}_i)&=\lbra{Q}\mathcal{G}_{inv}\Lambda\mathcal{G}_m\ldots\mathcal{G}_2\Lambda\mathcal{G}_1\lket{\rho}\\
&=\Tr(Q\rho_f).
\end{split}
\end{equation}
\begin{equation}
\begin{split}
\mathbb{V}(X_r|\vec{G}_i)&=\mathbb{E}(X_r|\vec{G}_i)-\mathbb{E}(X_r|\vec{G}_i)^2\\
&=\Tr(Q\rho_f)-(\Tr(Q\rho_f))^2.
\end{split}
\end{equation}
\begin{equation}
\begin{split}
Y &= \mathbb{E}_i(\mathbb{V}(X_r|\vec{G}_i))\\
&= \mathbb{E}_i(\Tr(Q\rho_f)-(\Tr(Q\rho_f))^2)\\
&= \mathbb{E}_i(\Tr(Q\rho_f))-\mathbb{E}_i((\Tr(Q\rho_f))^2)\\
&= A - B.
\end{split}
\end{equation}
\begin{equation}
\begin{split}
Z &= \mathbb{V}_i(\mathbb{E}(X_r|\vec{G}_i))\\
&= \mathbb{V}_i(\Tr(Q\rho_f))\\
&= \mathbb{E}_i((\Tr(Q\rho_f))^2)  - (\mathbb{E}_i(\Tr(Q\rho_f)))^2\\
&= B - A^2.
\end{split}
\end{equation}
Here,
\begin{equation}
A = \mathbb{E}_i(\Tr(Q\rho_f)),
\end{equation}
and
\begin{equation}
B = \mathbb{E}_i((\Tr(Q\rho_f))^2).
\end{equation}
Hence,
\begin{equation}
\begin{split}
\mathbb{V}(\Bar{X}_R(m, \vec{G}_i))&= \frac{Y}{R}+Z\\
&=\frac{A-B}{R}+B-A^2.\\
\end{split}
\end{equation}

Considering that the circuit sequences $\vec{G_i}$ are sampled uniformly and independently, it follows that $\mathbb{V}(\Bar{X}_N(m,\vec{G_i}))=\frac{1}{N}\mathbb{V}(\Bar{X}_R(m,\vec{G}_i))$. When combined with the Eq.~\eqref{alpha}, we have 
\begin{equation}
\begin{split}
\mathbb{V}(\Bar{X}_N(m)) &= \frac{\alpha+\beta R}{T_0}\mathbb{V}(\Bar{X}_R(m, \vec{G}_i))\\
&=\frac{1}{T_0}\left(\beta(A-B)+\alpha(B-A^2)+\beta R(B-A^2)+\frac{\alpha(A-B)}{R}\right) \label{eq:SRBvariance}\\
&\geq \frac{1}{T_0}\left(\sqrt{(\beta(A-B)}+\sqrt{\alpha(B-A^2)}\right)^2,
\end{split}
\end{equation}
with
\begin{equation}\label{eq:R*}
R^*=\sqrt{\frac{\alpha(A-B)}{\beta(B-A^2)}},
\end{equation}
to get the minimum variance. 

As for $A$, utilizing the $2$-design property of gate set $\mathbf{G}$, we can rewrite $A$ as \begin{equation}
\begin{split}        A&=\mathbb{E}_{\vec{G}}\left(\lbra{Q}\left(\mathcal{G}_{inv}\Lambda\mathcal{G}_m\ldots\mathcal{G}_1\right)\ldots\left((\mathcal{G}_2\mathcal{G}_1)^\dagger\Lambda(\mathcal{G}_2\mathcal{G}_1)\right)(\mathcal{G}_1^\dagger\Lambda\mathcal{G}_1)\lket{\rho}\right)\\
&=\lbra{Q}\mathbb{E}\prod_{i=1}^m(\mathcal{G}_i^\dagger\Lambda\mathcal{G}_i)\lket{\rho}\\
&=\Tr(Q\Lambda_{dep}^m(\rho)),
\end{split}
\end{equation}
where $\Lambda_{dep}$ is the depolarizing channel with the same fidelity as $\Lambda$ \begin{equation}
\Lambda_{dep}(\rho)=f(\Lambda)\rho+\frac{1-f(\Lambda)}{d}\mathbb{I},
\end{equation}
and $f(\Lambda)$ is the quality parameter associated with the average fidelity of $\Lambda$ as Eq.~\eqref{eq:quality}. Further direct calculations complete the proof.
\end{proof}

To analyze the explicit formula of $B$, we rewrite it as
\begin{equation}
\begin{split}
B&=\mathbb{E}_{\vec{G}}(\Tr(Q\rho_f)^2)\\
&=\mathbb{E}_{\vec{G}}\left(\lbra{Q^{\otimes2}}\mathcal{G}_{inv}^{\otimes2}\Lambda^{\otimes2}\mathcal{G}_m^{\otimes2}\ldots\mathcal{G}_2^{\otimes2}\Lambda^{\otimes2}\mathcal{G}_1^{\otimes2}\lket{\rho^{\otimes2}}\right)\\
&=\lbra{Q^{\otimes2}}\mathcal{T}(\Lambda^{\otimes2})^m\lket{\rho^{\otimes2}},
\end{split}
\end{equation}
where \begin{equation}
\mathcal{T}(\Lambda^{\otimes2})=\mathbb{E}_{G\in\mathbf{G}}\left(\mathcal{G}_i^{\dagger\otimes2}\Lambda^{\otimes2}\mathcal{G}_i^{\otimes2}\right)
\end{equation}
is a $twirling$ of quantum channel $\Lambda^{\otimes2}$ over the two-copy representation of $\mathbf{G}$. 
When the target gate set is the $n$-qubit Clifford group $\mathbf{C}_n$, knowing the structure of the two-copy representation of the multi-qubit Clifford group $\mathcal{G}^{\otimes2}$~\cite{multi-qubit_Clifford_representations} allows us to decompose  $\mathcal{T}(\Lambda^{\otimes2})$ into several terms. One can then express $B$ as a linear combination of exponential-decay functions with respect to $m$. Since there exist equivalent representations in the irreducible decomposition of the two-copy representation of the multi-qubit Clifford group, $B$ includes matrix exponential-decay terms for general noise channel $\Lambda$ \cite{multi-qubit_Clifford_representations}. In the next subsection, we numerically evaluate $B$ and the optimal reusing times for different noise models.

It is worth noting that the result for $R^*$ described in Corollary \ref{lemma1} can be easily generalized to the variations of RB protocols. For a given RB protocol, once detailed procedures and the benchmarking group are determined, one can always represent the $r$-th time measurement result under gate sequence $\Vec{G_i}$ as a random variable $X_r(m,\vec{G_i})$. Then one can further obtain the value of $Y$, $Z$, and optimal reusing times $R^*$ by substituting $X_r(m,\vec{G_i})$ in Eq.~\eqref{eq:coefYZ}.
Interestingly, for a simple variation of standard RB, named difference RB~\cite{helsen2019UsingFewSamples}, we can derive an analytical lower bound of the optimal reusing times $R^*$. This lower bound is a function of the fidelity and unitarity~\cite{unitarity} of the noise channel $\Lambda$. For brevity, we omit the result here.


\subsection{Numerical simulations}
Let us first consider a practical experimental condition in which we possess the relaxation and decoherence times of the quantum gates~\cite{nielsen2010quantum}. These two types of times correspond to amplitude-damping and phase-damping noises, respectively. These two noise types are crucial for characterizing noise encountered in the physical realization of quantum systems.
The amplitude-damping channel $\Lambda_a$ describes an excited multi-level atomic state's decay due to the spontaneous emission of a photon, and the phase-damping channel $\Lambda_p$, also known as the dephasing channel, illustrates the decay of coherence over time due to external perturbations \cite{nielsen2010quantum}. The single-qubit amplitude-damping channel is 
\begin{equation}
\begin{split}
\Lambda_a(\rho)=K_0\rho K_0^{\dagger}+K_1\rho K_1^{\dagger},
\ K_0=\begin{pmatrix}
1&0\\
0&\sqrt{p}
\end{pmatrix},\ K_1=\begin{pmatrix}
0& \sqrt{1-p}\\
0&0
\end{pmatrix},
\end{split}
\end{equation}
and the single-qubit phase-damping channel is 
\begin{equation}
\begin{split}
\Lambda_p(\rho)=E_0\rho E_0^\dagger+E_1\rho E_1^\dagger,\ E_0=\begin{pmatrix}
1&0\\
0&\sqrt{p}
\end{pmatrix},
E_1=\begin{pmatrix}
0&0\\
0&\sqrt{1-p}
\end{pmatrix},
\end{split}
\end{equation}
where parameter $p\in [0,1]$ is related to the noise strength. Deriving an analytical expression for parameter $B$ in the single-qubit case is theoretically feasible but limited to simple noise models and not easily generalizable. It is challenging to extend it to multi-qubit systems, so we used numerical simulations to estimate $B$ and validate our results.

In our simulation, we take a composite noise channel comprising a local amplitude-damping channel $\Lambda_{la}$ with parameter $p_1$ alongside a local phase-damping channel $\Lambda_{lp}$ with parameter $p_2$,
\begin{equation}
\begin{split}\label{comp2}
\Lambda_{comp}=\Lambda_{lp}(p_2)\circ\Lambda_{la}(p_1),
\end{split}
\end{equation} 
where \begin{equation}
\begin{split}
\Lambda_{la}&=\Lambda_a^{\otimes n},\\
\Lambda_{lp}&=\Lambda_p^{\otimes n}.
\end{split}
\end{equation}

Consider the case that implementing each circuit $R$ times cost $t(R) = R + 4$. In this case, $R_0=4$ stands as a $2$-optimal value and the optimal reusing times $R^*=\sqrt{\frac{4(A-B)}{B-A^2}}$ according to Corollary \ref{cor:2opt_constant} and \ref{lemma1}, respectively.
Recall that once the noise channel form is determined, the value of $A$ can be evaluated directly via the noise channel fidelity, and the value of $B$ is a quantity averaged over multiple random Clifford gate sequences. We uniformly sample $50000$ distinct Clifford circuits at random for each predefined length $m$ to simulate the value of $B$. Subsequently, we compare the variance $\mathbb{V}(\Bar{X}_N)$ across different circuit reuse strategies: the optimal one $R^*$, the $2$-optimal one $R_0=4$ and the naive one $R_{naive}=1$.

The value of $R^*$ under noise channel $\Lambda_{comp}$ across various noise parameters is illustrated in Fig.~\ref{R_p1fix} and Fig.~\ref{R_p2fix}. One can observe that generally, when noise level increases, i.e., noise parameter $p_i$ becomes smaller, the value of $R^*$ tends to reduce. 
This is because when the noise channel incurs a larger deviation, more randomized circuit twirling is necessary to gather adequate information concerning the noise channel, consequently leading to less circuit reusing.
We also conduct a comparison between the variance of estimators for optimal $R^*$, $2$-optimal $R_0$, and that for naive $R_{naive}=1$, as illustrated in Fig.~\ref{V_p1fix} and Fig.~\ref{V_p2fix}. It can be seen that the variance for $R_0$ does not overpass the minimum variance $2$ times, consistent with our analysis of Corollary~\ref{cor:2opt_constant}. Moreover, both the optimal and $2$-optimal strategies demonstrate great improvement relative to the naive selection of $R_{naive}=1$. We also conduct simulations over other typical noise channels and analyze their performance, which can be found in Appendix~\ref{appendsc:Num_res}. 

\begin{figure}[htbp]
\raggedright
\vspace{-0.4cm}
\begin{minipage}[b]{0.9\linewidth}
\subfloat[]{\label{R_p1fix}
\includegraphics[width=7.8cm]{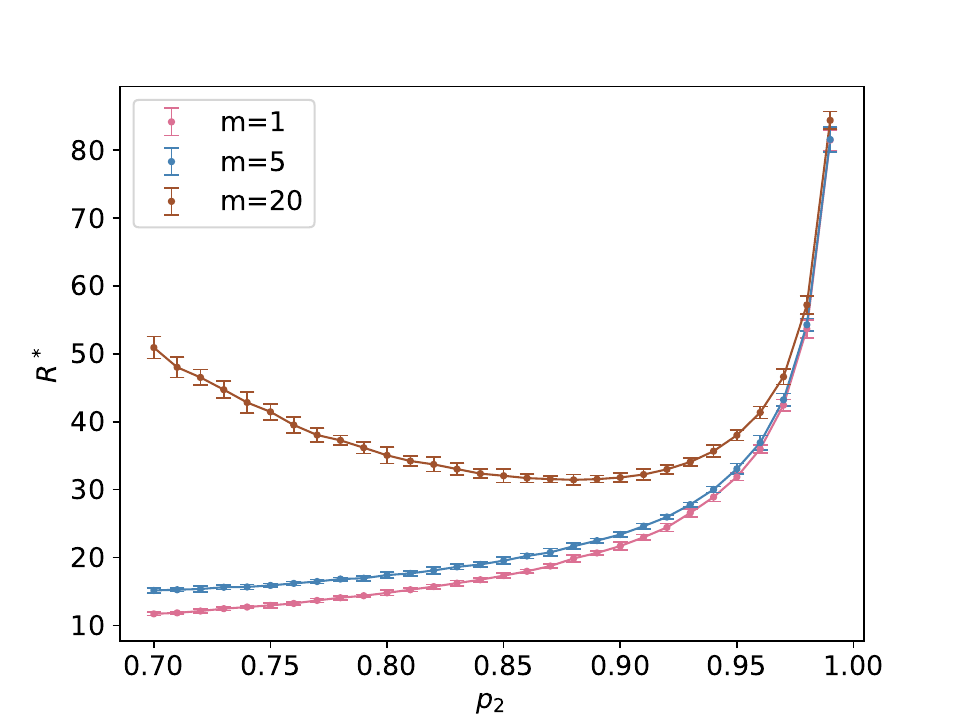}
}
\subfloat[]{\label{R_p2fix}
\includegraphics[width=7.8cm]{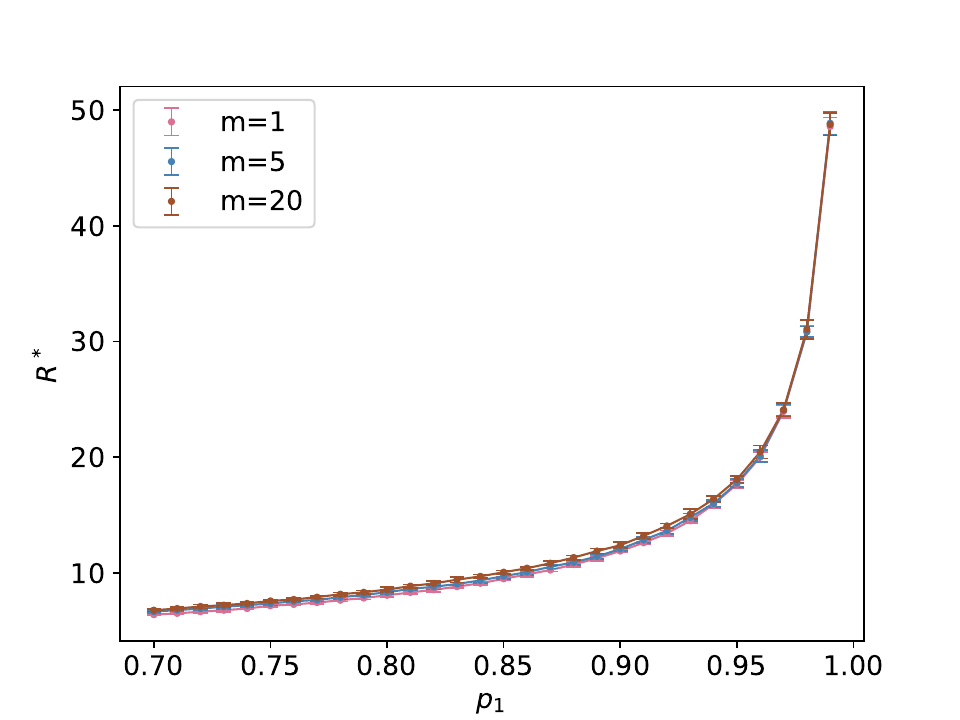}
}\\
\vspace{-0.4cm}
\end{minipage}
\begin{minipage}[b]{0.9\linewidth}
\subfloat[]{\label{V_p1fix}
\includegraphics[width=7.8cm]{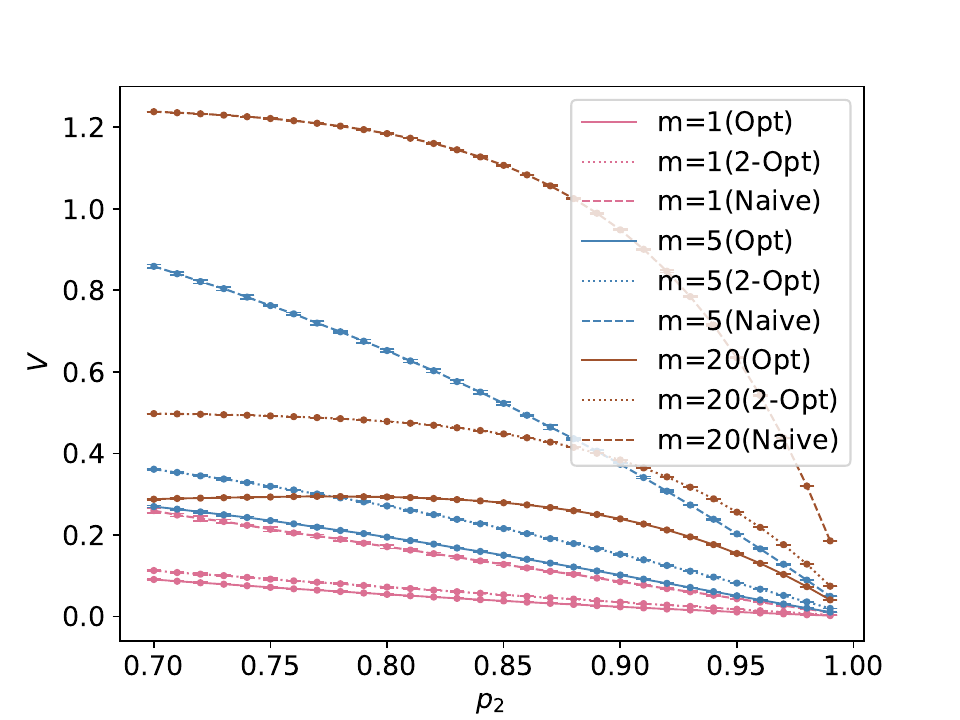}
}
\subfloat[]{\label{V_p2fix}
\includegraphics[width=7.8cm]{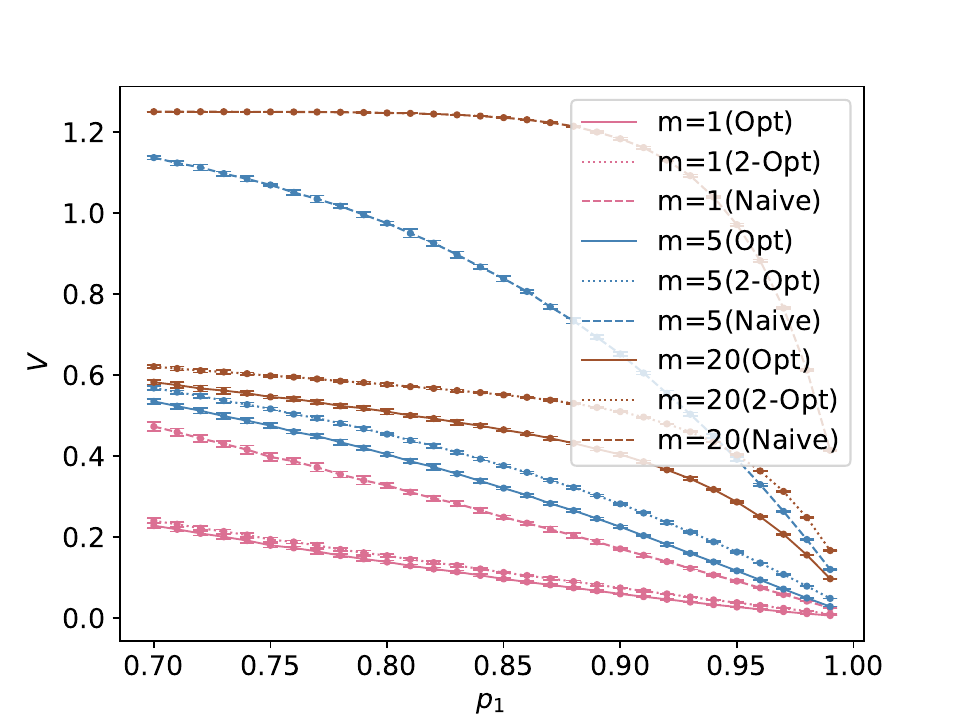}
}
\end{minipage}
\caption{These figures illustrate the simulation results for the standard RB protocol under the composite channel $\Lambda_{comp}$ defined by Eq.~\eqref{comp2}. The SPAM specifications are assigned as projective measurement $Q=\ketbra{0}$ and initial state $\rho=\ketbra{0}$. Figure (a) presents the optimal reusing times $R^*$ under the composite channel with fixed amplitude-damping parameter $p_1=0.999$, while Figure (b) shows $R^*$ with fixed phase-damping parameter $p_2=0.99$. Figure (c) compares the variance across the naive reusing parameter $R_{naive}=1$, the $2$-optimal one $R_0=4$ and the optimal one $R^*$, when the amplitude-damping parameter is fixed $p_1=0.999$, while figure (d) shows those with fixed phase-damping parameter $p_2=0.99$. We include error bars in the figures, representing twice the standard error of the mean.}
\label{Comp2_simulation}
\end{figure}

\section{Experimental results}\label{sc:Exp}
This section presents experimental results demonstrating the application of the circuit reusing strategy in RB. 
We conducted a $2$-qubit standard RB experiment on a superconducting quantum computing platform~\cite{Shaowei2022CZ}, employing various reusing times $R$. The set of reusing times in our experiment is presented below:
\begin{equation}
\mathbf{S} = \{2, 5, 10, 20, 50, 100, 200, 500, 1000, 2000, 5000, 10000, 20000\}.
\end{equation}
For each value of $R$, we implement $N$ distinct gate sequences, detailed in Table~\ref{runtime_table}.

\begin{table}[!htp]
\begin{adjustbox}{center}
\centering
\resizebox{1.1\linewidth}{!}{
\begin{tabular}{|c|c|c|c|c|c|c|c|c|c|c|c|c|c|}
\hline
$R$&2&5&10&20&50&100&200&500&1000&2000&5000&10000&20000\\
\hline
$N$&300000& 100000& 50000& 30000& 10000& 10000& 8000& 5000& 3000& 1800& 600& 300& 200\\
\hline
$T\ (s)$&52550.3& 17523.9& 8774.2& 5292.4& 1769.6& 1829.5& 1787.7& 1719.0& 1643.1& 1720.3& 1311.5& 1269.0& 1663.1\\
\hline
$T_0\ (s)$&53250.0& 17750.0& 8875.0& 5325.0& 1775.0& 1775.0& 1748.0& 1707.5& 1639.5& 1721.7& 1311.9& 1271.0& 1667.3\\
\hline
$T_0/T$&1.013& 1.013& 1.011& 1.006& 1.003& 0.970& 0.978& 0.993& 0.998& 1.001& 1.000& 1.002& 1.003\\
\hline
\end{tabular}
}
\end{adjustbox}
\caption{Comparison between the practical experimental runtime $T(N,R)$ and the estimated runtime $T_0(N,R)$ by our cost model $T_0(N,R)=C_1N\lceil R/R_c\rceil+C_2N$ with parameters $(C_1,C_2,R_c)=(0.0410s,0.1365s,100)$.}
\label{runtime_table}
\end{table}

The experimental runtime data, summarized in Table~\ref{runtime_table}, inspires us to adopt a ladder-like model for the total time expense $T_0$, dependent on $R$:
\begin{equation} \label{exp_cost_model}
T_0(N,R) = C_1N\lceil R/R_c \rceil + C_2N,
\end{equation}
where the coefficients are determined to be $(C_1, C_2, R_c) = (0.0410 \text{s}, 0.1365 \text{s}, 100)$. Accordingly, the cost to implement each circuit $R$ times is given by:
\begin{equation}
t(R) = C_1\lceil R/R_c \rceil + C_2,
\end{equation}
a model previously employed in the optimization analysis in Section \ref{subsec:R*}. To verify the accuracy of this cost model, we compare the experimental runtime per circuit $T/N$ against the values calculated using our model $T_0/N$, as shown in Fig.~\ref{fig:time_compare}.

\begin{figure}[htbp!]
\raggedright
\begin{minipage}[b]{0.9\linewidth}
\subfloat[]{\label{fig:time_compare}
\includegraphics[width=7.8cm]{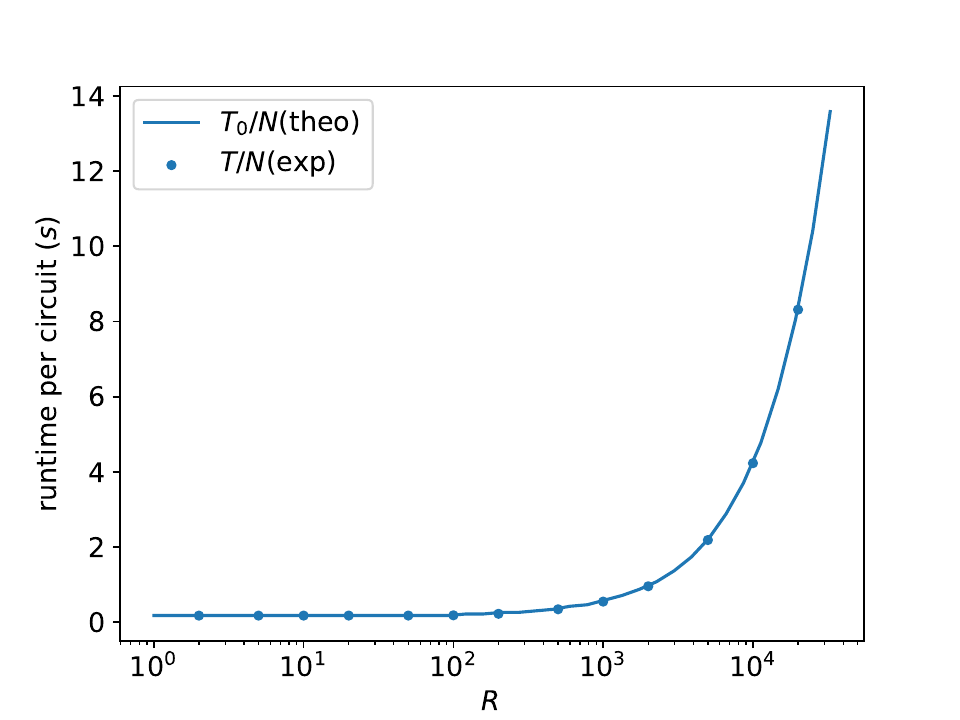}
}
\subfloat[]{
\label{fig:exp_V_R}
\includegraphics[width=7.8cm]{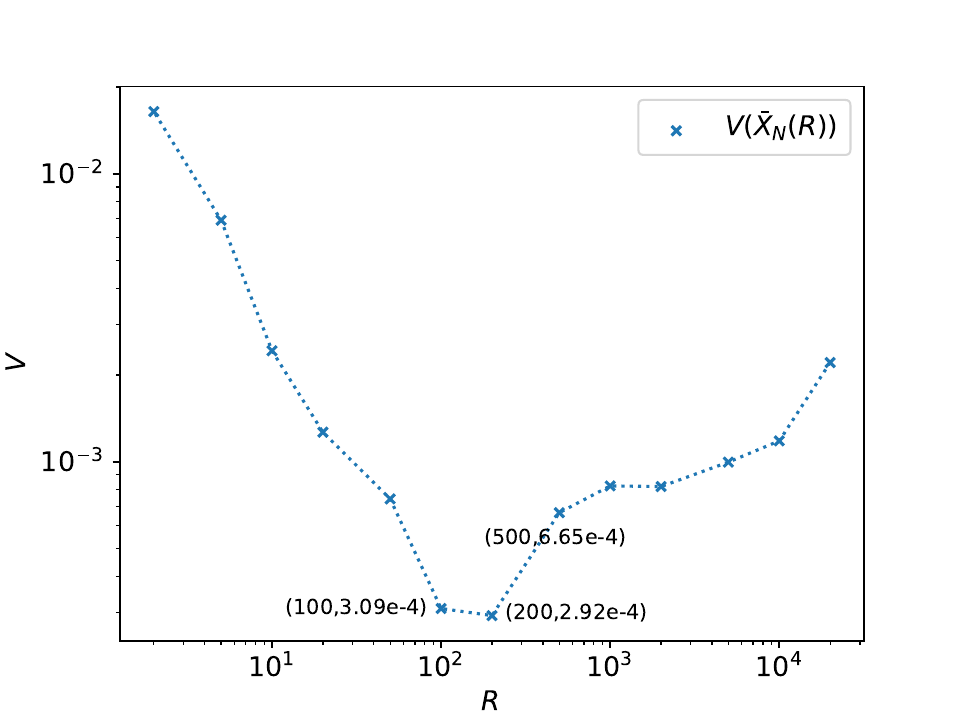}}
\end{minipage}
\caption{(a) This plot compares the experimental runtime per circuit, $T/N$, for different reusing times $R$, with the predicted costs from our model, $T_0/N$. The alignment of these values validates our cost model. (b) This plot displays the variance of the average measurement results, $\Bar{X}_N(R)$, across various reusing times $R$ and different numbers of unique gate sequences $N'$ as specified in Table~\ref{tab:assignedtask}. Each sequence is executed over a circuit length of $m=30$. The reusing times with the experimental minimum variance is 200, and the corresponding variance is $2.92\times 10^{-4}$. We also label the variances associated with the reusing times 100 and 500 in the figure.}
\end{figure}

From an experimental standpoint, the physical intuition behind this cost model presented in Eq.~\eqref{exp_cost_model} is clear. The term $C_2N$ represents the computational time expense required for generating each circuit, including the computation and decomposition of the inverse gate as well as the random selection of each gate sequence.
The term $C_1 N \lceil R/R_c \rceil$ represents the hardware execution time, which includes circuit implementation, data packing and transmission, and other associated tasks. Among these, the dominant time cost arises from data packing and transmission. In our quantum computing platform, data packing and transmission operations are performed in batches, with each batch containing $R_c$ units. After $R_c$ iterations of measurement, the results are transmitted to the classical computer as a single batch. This batching strategy is commonly used in classical computer architecture to amortize the computational overhead of specific operations across multiple instances, thereby improving system efficiency. Consequently, the factor $\lceil R/R_c \rceil$ indicates the number of ``batches'' into which the data transfer between quantum devices and classical computers is divided, while $C_1$ represents the cost per batch. In contrast, the time required for circuit implementation is relatively short and can be omitted.

For each reusing times $R\in\mathbf{S}$, we compute the variance of the average survival probability as depicted in Fig.~\ref{fig:exp_V_R}. To ensure uniform experimental costs across different values of $R$, we employ $N'$ distinct gate sequences for each value of $R$ to calculate the survival probability, where the values of $N'$ are detailed in Table~\ref{tab:assignedtask}. This process is repeated $10$ times to obtain 10 distinct survival probabilities for each $(R,N')$. Fig.~\ref{fig:exp_V_R} illustrates the variance among these 10 values. Specifically, the variances of the survival probabilities are $3.09 \times 10^{-4}$, $2.92\times 10^{-4}$, and $6.65 \times 10^{-4}$ for $R = 100$, $200$, and $500$, respectively. These values represent the three smallest variances observed in the experiments, suggesting that the experimental optimal reusing times $R^*_{exp}$ is approximately $200$, and lies within the range $[100,500]$.

\begin{table}[!htp]
\begin{adjustbox}{center}
\centering
\resizebox{1.1\linewidth}{!}{
\begin{tabular}{|c|c|c|c|c|c|c|c|c|c|c|c|c|c|}
\hline
$R$ & 2&5&10&20&50&100&200&500&1000&2000&5000&10000&20000 \\
\hline
$N'$ &715&715&715&715&715&715&581&372&232&133&58&30&15 \\
\hline
V&0.0164&0.0069&0.0024&0.0013&0.00074&0.00031&0.00029&0.00067&0.00082&0.00082&0.0010&0.0012&0.0022\\
\hline
$T_0$&125.05&127.09&126.82&127.21&126.79&127.04&126.95&126.91
&126.91&126.91&126.91&126.91&126.91\\
\hline
\end{tabular}
}
\end{adjustbox}
\caption{Experimental specifications for different reusing times $R$ with corresponding distinct gate sequences $N'$ to ensure uniform experimental runtime. Each set of measurements is repeated 10 times to evaluate the variance in survival probabilities shown in Fig.~\ref{fig:exp_V_R}.}
\label{tab:assignedtask}
\end{table}

Utilizing the condition that $R/R_c \leq \lceil R/R_c \rceil \leq (R+R_c-1)/(R_c)$, we derive that
\begin{equation}
\frac{C_1}{R_c}R + C_2 \leq t(R) \leq \frac{C_1}{R_c}R + C_2 + C_1\left(1-\frac{1}{R_c}\right).
\end{equation}
Based on Theorem~\ref{thm:2opt}, we determine that $R_0 = \frac{C_2R_c}{C_1}$ is a $\left(2 + \frac{C_1}{C_2}\left(1-\frac{1}{R_c}\right)\right)$-optimal value. Using the experimental coefficients $C_1, C_2$, and $R_c$, we find $R_0 = 333$ to be a $2.3$-optimal value. This theoretical finding is consistent with experimental data from $R = 200$ and $R = 500$, as illustrated in Fig.~\ref{fig:exp_V_R}.

Furthermore, we utilize the mean of the experimental measurement outcomes to approximate the expectations $A$ and $B$ as defined in Corollary~\ref{lemma1}, determining the theoretical optimal reusing times $R^*$.
\begin{equation}
\begin{split}
A &= \mathbb{E}_{\vec{G}}\left(\lbra{Q}\mathcal{G}_{inv}\Lambda\mathcal{G}_m\ldots\mathcal{G}_2\Lambda\mathcal{G}_1\lket{\rho}\right)\\
&= \mathbb{E}_{\vec{G}}\Pr(0|\vec{G})\\
&\approx \overline{\Pr(0)}\\
&= 0.1482, \\ 
B &= \mathbb{E}_{\vec{G}}\left(\lbra{Q}\mathcal{G}_{inv}\Lambda\mathcal{G}_m\ldots\mathcal{G}_2\Lambda\mathcal{G}_1\lket{\rho}^2\right)\\
&= \mathbb{E}_{\vec{G}}\Pr(0|\vec{G})^2\\
&\approx \overline{\Pr(0)^2}\\
&=0.0248, \\
Y &= A - B = 0.1234, \\ 
Z &= B - A^2 = 0.0028, \\ 
\sqrt{\frac{C_2Y}{C_1ZR_c}} &= 1.20, \\ 
R^* &\in \{100, 200\}.
\end{split}
\end{equation}
Here, $\Pr(0|\vec{G})$ denotes the probability of obtaining the measurement result $0$ when implementing the gate sequence $\vec{G}$; $\overline{\Pr(0)}$ represents the average probability of obtaining $0$ across all experimentally realized gate sequences; and $\overline{\Pr(0)^2}$ is the average of the squared probabilities of obtaining $0$ for all experimentally realized gate sequences. This result also aligns with the experimental optimal reusing times $R_{exp}^*$.

Note that despite the practical experiment for RB not strictly conforming to the gate-independent and Markovian noise assumption, the interval of our theoretical optimal reusing times remains in alignment with the experimental results and provides useful insights into reducing the experimental overhead.

\section{Conclusion and outlook}\label{sc:conclude}
Reusing quantum circuits often presents a more cost-effective approach compared to preparing and executing new circuits. In this work, we have developed a general analytical framework to optimize the circuit reusing times $R$ in quantum learning tasks. By employing the Law of Total Variance, we decompose the variance of outcomes into two components, enabling us to derive a near-optimal solution for determining $R$. This solution is remarkably versatile and applicable to various quantum learning tasks irrespective of specific noise models. Furthermore, when the noise characteristics of circuits are known, we can pinpoint the exact optimal reusing times $R^*$.

We experimentally implemented the standard RB protocol on a superconducting quantum computing platform to validate our theoretical framework. Utilizing the experimental runtime, we constructed a cost model specific to this platform and determined the optimal and near-optimal reusing times. The experimental data demonstrate that the initialization of a new circuit incurs a higher cost, underscoring the imperative of the circuit reusing strategy in experiments. Nonetheless, the relationship between $R$ and the total cost is observed to be non-linear, contradicting previous assumptions in the literature. Finally, the experimentally observed optimal reusing times, which minimizes experimental variance, aligns with our theoretical predictions.

For future exploration, our variance-cost analytical framework holds the potential to optimize protocols for characterizing various quantum properties, such as entanglement \cite{entanglement} and coherence \cite{streltsov2015measuring}. Our analysis thus far has assumed independent and identically distributed conditions for measurement outcomes. Future investigations could explore how our findings extend to scenarios where executed circuits exhibit correlations.  Additionally, it would be valuable to extend our analysis to a wider range of RB scenarios and investigate the reusing strategy within generalized RB frameworks~\cite{general,heinrich2022randomized}. Lastly, our results suggest revisiting similar studies in other quantum learning tasks, like thrifty shadow estimation \cite{Thrifty_shadow, zhou2023performance}, where the linear model might not be applicable.

\begin{acknowledgements}
We express our gratitude to Zhenyu Du, Chu Zhao, Zitai Xu, and Ziyi Xie for their insightful discussions. The experiments in this work are performed on the superconducting platform~\cite{Shaowei2022CZ} from the University of Science and Technology of China. We especially thank Daojin Fan and Ming Gong for the experimental implementation. This work was supported by the National Natural Science Foundation of China Grant No.~12174216 and the Innovation Program for Quantum Science and Technology Grant No.~2021ZD0300804 and No.~2021ZD0300702.
\end{acknowledgements}

\appendix

\section{Preliminary}\label{app_Rep}
\subsection{Representation theory}\label{rep_theory}
In this part, we recall some useful facts regarding the representations of finite groups. Consider a finite group, $\mathbf{G}$, and a finite-dimensional complex vector space $V$. Denote
$GL(V)$ as the group of invertible linear transformations of $V$, we can define a representation $\phi$ of the group $\mathbf{G}$ on the
space $V$ as a map \begin{equation}
\phi: \mathbf{G}\rightarrow GL(V):g \mapsto \phi(g),
\end{equation}
such that \begin{equation}
\phi(g)\phi(h)=\phi(gh),\ \  \forall g,h\in\mathbf{G}.
\end{equation}
Given representation $\phi$ on $V$, a linear subspace $W\subseteq V$ is called $invariant$ if \begin{equation}
\phi(g)w\in W,\ \ \forall w\in W,\forall g\in \mathbf{G}.
\end{equation}
The restriction of $\phi$ to the invariant subspace $W$ is known as a $subrepresentation$ of $\mathbf{G}$ on $W$. If there is a non-trivial subspace $W$ of $V$, i.e. $W\neq\{0\},V$, such that
\begin{equation}\label{irred}
\phi(g)W\subset W,\ \ \forall g\in \mathbf{G},
\end{equation}
then the representation $\phi$ is called $reducible$. If there are no non-trivial subspaces $W$ satisfying Eq.~\eqref{irred}, the representation 
$\phi$ is called $irreducible$. Two representations $\phi,\phi'$ of a group $\mathbf{G}$ on spaces $V,V'$ are called $equivalent$ if there exists an 
invertible linear map $T:V\rightarrow V'$ such that \begin{equation}
T\circ\phi(g)=\phi'(g)\circ T,\ \ \forall g\in \mathbf{G}.
\end{equation}

Maschke's lemma states that every representation $\phi$ of a group can be uniquely written as a direct sum of irreducible representations \begin{equation}
\phi(g)\simeq \bigoplus_{\lambda\in R_G} \phi_\lambda(g)^{\oplus m_\lambda},\ \ \forall g\in\mathbf{G},
\end{equation}
where $R_G$ is the index set of the irreducible representations, and $m_\lambda$ represents the multiplicity of the equivalent 
irreducible representation of $\phi_\lambda$ in $\phi$. 

One important object in representation theory is $character$, the $character$ $\chi_{\phi}$ corresponding to a representation $\phi$ of a group $\mathbf{G}$ is defined as \begin{equation}
\chi_{\phi}(g)=\Tr(\phi(g)).
\end{equation}
With the character, we introduce the $projector$ onto irreducible representation. Let $\phi_\lambda$ be an irreducible representation contained in $\phi$ with character $\chi_\lambda$, then the projector onto the support space of $\phi_\lambda$ can be 
described by \begin{equation}
\mathcal{P}_{\phi_\lambda}=\frac{|\phi_\lambda|}{|\mathbf{G}|}\sum_{g\in\mathbf{G}}\chi_\lambda(g)\phi(g),
\end{equation}
where $|\phi_\lambda|$ denotes the dimension of $\phi_\lambda$.

Define the twirl $\mathcal{T}_{\phi}(M)$ of a linear map $M:V\rightarrow V$ with respect to the representation $\phi$ to be \begin{equation}
\mathcal{T}_{\phi}(M):=\frac{1}{|\mathbf{G}|}\sum_{g\in\mathbf{G}}\phi(g)M\phi(g)^\dagger.
\end{equation}
Now we can further analyze the twirl map over specific types of representations by corollary of Schur's lemma: \begin{lemma}
Let $\mathbf{G}$ be a finite group and let $\phi$ be a representation of $\mathbf{G}$ on a complex vector space $V$ with decomposition \begin{equation}
\phi(g)\simeq \bigoplus_\lambda \phi_{\lambda}(g),\ \ \forall g\in\mathbf{G}
\end{equation}
into inequivalent irreducible subrepresentations $\phi_\lambda$, i.e. $m_\lambda=1$. Then for any liner operation $M:V\rightarrow V$, the twirl map of $M$ over $\mathbf{G}$ is given by \begin{equation}
\mathcal{T}_{\phi}(M)=\sum_\lambda\frac{\Tr(M\mathcal{P}_\lambda)}{\Tr(\mathcal{P}_\lambda)}\mathcal{P}_\lambda,
\end{equation}
where $\mathcal{P}_\lambda$ is the projector onto the subspace of $\phi_\lambda$.
\end{lemma}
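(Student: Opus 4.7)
The statement to prove is a standard consequence of Schur's lemma, so the plan is to first establish that the twirl $\mathcal{T}_\phi(M)$ is $\mathbf{G}$-invariant, then use the inequivalent multiplicity-free decomposition to write it as a sum over the projectors $\mathcal{P}_\lambda$, and finally pin down the coefficients by taking traces.

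First, I would verify that $\mathcal{T}_\phi(M)$ commutes with $\phi(h)$ for every $h\in\mathbf{G}$. This is a one-line calculation: using the change of variables $g\mapsto h^{-1}g$ (or equivalently $gh$) in the sum $\frac{1}{|\mathbf{G}|}\sum_g \phi(g)M\phi(g)^\dagger$, one finds $\phi(h)\mathcal{T}_\phi(M) = \mathcal{T}_\phi(M)\phi(h)$. Since the decomposition $\phi \simeq \bigoplus_\lambda \phi_\lambda$ is multiplicity-free and the $\phi_\lambda$ are pairwise inequivalent irreducibles, Schur's lemma implies that any operator commuting with all $\phi(g)$ must act as a scalar on each irreducible block. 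Therefore
\begin{equation}
\mathcal{T}_\phi(M) = \sum_\lambda c_\lambda \mathcal{P}_\lambda,
\end{equation}
for some scalars $c_\lambda\in\mathbb{C}$ that depend on $M$.

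Next, I would determine $c_\lambda$. Multiplying both sides by $\mathcal{P}_\lambda$ (which is orthogonal to $\mathcal{P}_{\lambda'}$ for $\lambda\neq\lambda'$) and taking the trace yields $\Tr(\mathcal{T}_\phi(M)\mathcal{P}_\lambda) = c_\lambda \Tr(\mathcal{P}_\lambda)$. On the other hand, because $\mathcal{P}_\lambda$ commutes with every $\phi(g)$ (it is the projector onto an invariant subspace) and by cyclicity of the trace,
\begin{equation}
\Tr(\mathcal{T}_\phi(M)\mathcal{P}_\lambda) = \frac{1}{|\mathbf{G}|}\sum_{g\in\mathbf{G}} \Tr\bigl(\phi(g)M\phi(g)^\dagger \mathcal{P}_\lambda\bigr) = \Tr(M\mathcal{P}_\lambda).
\end{equation}
Hence $c_\lambda = \Tr(M\mathcal{P}_\lambda)/\Tr(\mathcal{P}_\lambda)$, which yields the claimed formula.

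The argument is essentially bookkeeping once Schur's lemma is in hand, so there is no serious obstacle; the only care point is the inequivalence hypothesis $m_\lambda=1$, which rules out off-diagonal isometries between equivalent copies that would otherwise appear in the commutant and force a more general matrix-valued form instead of the simple scalar $c_\lambda$. I would state this briefly to justify why the sum collapses onto $\{\mathcal{P}_\lambda\}$ rather than a richer family of block operators.
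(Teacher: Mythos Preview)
Your proof is correct and is precisely the standard Schur's-lemma argument. The paper does not actually give a proof of this lemma; it simply introduces it as a ``corollary of Schur's lemma'' and then uses the result, so your write-up fills in exactly the details the paper leaves implicit.
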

Denote $f_\lambda=\frac{\Tr(M\mathcal{P}_\lambda)}{\Tr(\mathcal{P}_\lambda)}$, we can find that,
\begin{equation}
\mathcal{T}_\phi(M)^m=\sum_\lambda f_\lambda^m\mathcal{P}_\lambda,
\end{equation}
on account of the idempotent and mutually orthogonal properties of projectors $\mathcal{P}_\lambda$. 
Hence, to benchmark the gate set of a group $\mathbf{G}$, one can sample the circuit across the group with a given length $m$, then measure the averaged survival probabilities \begin{equation}
\mathbb{E}(X(m))=\sum_\lambda C_\lambda f_\lambda^m
\end{equation}
with coefficient $C_\lambda$ associated with SPAM terms. By applying post-processing to fit an exponential decay model, the average fidelity of the gate set can be estimated by the decay parameters $f_\lambda$, which is the key technique of RB \cite{general}. However, this functional form holds only when representation $\phi$ has no 
irreducible subrepresentation occurs more than once, that is, in the $m_\lambda=1$ case. In the general case with multiplicity $m_\lambda\neq 1$, the 
averaged survival probability takes the form \begin{equation}\label{non_inequ}
X(m)\approx \sum_\lambda \Tr(C_\lambda M_\lambda^m),
\end{equation}
where $M_\lambda$ is an $m_\lambda \times m_\lambda$ real matrix that only depends on the implementation of $\phi$ and $C_\lambda$ is an 
$m_\lambda\times m_\lambda$ matrix also associated with SPAM behavior \cite{general}. It is important to note that the matrices $M_\lambda$ may not be diagonalizable, rendering it impossible to determine the decay parameter associated with average fidelity by fitting in this scenario. Consequently, the majority of RB protocols concentrate primarily on scenarios where $m_\lambda=1$.

\subsection{Liouville representation of quantum channel}
In this part, we will introduce the Liouville representation, which is extensively utilized in discussions about quantum channels. 
Denote the Hilbert space for $n$ qubits as $\mathcal{H}$ and the set of linear operators on $\mathcal{H}$ 
as $\mathcal{L}(\mathcal{H})$. The Liouville representation is 
defined on a set of the trace-orthonormal basis on $\mathcal{L}(\mathcal{H})$, and the most widely used basis is the normalized Pauli group \begin{equation}
\mathbf{P}_n':=\left\{\sigma_i=\frac{P_i}{\sqrt{d}}\mid P_i\in\{\mathbb{I},X,Y,Z\}^{\otimes n}\right\},
\end{equation}
where $d=2^n$ is the system dimension. A good property of this basis is that each normalized Pauli operator is complete and satisfies \begin{equation}
\Tr\left(\sigma_i^\dagger\sigma_j\right)=\delta_{ij},\ \ \forall\sigma_i,\sigma_j\in\mathbf{P}_n'
\end{equation}
under the Hilbert-Schmidt inner product. Any $n$-qubit operator $O\in\mathcal{L}(\mathcal{H})$ can be decomposed over the $4^n$ normalized Pauli operators, \begin{equation}
O=\sum_{\sigma_i\in\mathbf{P}_n'}\Tr\left(\sigma_i^\dagger O\right)\sigma_i.
\end{equation} 
Thus, one can reformulate $O$ in a vector form, \begin{equation}
\lket{O}=\sum_{\sigma_i\in\mathbf{P}_n'}\Tr\left(\sigma_i^\dagger O\right)\lket{\sigma_i},
\end{equation}
where $\lket{\sigma_i}$ in Liouville representation is a $4^n$-length vector with only one non-zero element $1$. 

Recall that quantum channels are known as completely positive and 
trace-preserving linear maps on $\mathcal{L}(\mathcal{H})$. Through the Liouville representation, any 
quantum channel can be represented as a matrix. The action of the channel $\Lambda$ on an operator $O$ is given by \begin{equation}
\lket{\Lambda(O)}=\Lambda\lket{O},
\end{equation}
where $\Lambda_{ij}=\lbra{\sigma_i}\Lambda\lket{\sigma_j}=\Tr(\sigma_i\Lambda(\sigma_j))$. In the main text, what we mainly use is the action of a unitary operation $G$ on a state $\rho$: \begin{equation}
\lket{G\rho G^\dagger}=\mathcal{G}\lket{\rho}.
\end{equation} 
The composition of two channels can be depicted as the product of two matrices, \begin{equation}
\lket{\Lambda_2\circ\Lambda_1(O)}=\Lambda_2\Lambda_1\lket{O}.
\end{equation}
Furthermore, the measurement operator can also be vectorized with the Liouville notation according to the definition of the Hilbert-Schmidt inner product. For example, the measurement probability of a state $\rho$ on a positive operator-valued measure (POVM) $\{F_i\}$ is described as \begin{equation}
p_i=\lbraket{F_i}{\rho}=\Tr\left(F_i^\dagger\rho\right).
\end{equation}

\subsection{Randomized benchmarking}\label{appendssc:rb}
The following box presents the detailed procedure of standard RB with $R$ times circuit reusing. 
\begin{tcolorbox}[title = {Standard RB}, breakable]
\begin{itemize}
\item[1.] Sample a random circuit $\vec{G}_i=\{G_1, \ldots, G_m\}$ of length $m$, where each gate $G_i$ is independently 
and uniformly sampled from the gate set $\mathbf{G}$. 
\item[2.] Compute the global inverse gate $G_{inv} = (G_m\cdots G_1)^{-1}$, and add it to the end of $\vec{G}_i$ to ensure that the entire circuit is equivalent to the identity channel $\mathcal{I}$ in the noiseless case. 
\item[3.] Apply the circuit with inverse gate to the initial state $\rho$, and then take POVM measurement $\{Q,\mathbb{I}-Q\}$. 
\item[4.] Repeat step 3 for $R$ times, the result of $r$-th measurement can be described as a random variable $X_r(m,\vec{G_i})$, and estimate the survival probability of sequence $\vec{G_i}$: \begin{equation}
\Bar{X}_R(m,\vec{G}_i)=\frac{1}{R}\sum_rX_r(m,\vec{G_i}).
\end{equation} 
\item[5.] Repeat steps 1-4 for $N$ distinct circuits $\vec{G_i}$, and estimate the overall survival probability by aggregating the result derived from each set of measurements: \begin{equation}
\Bar{X}_N(m)=\frac{1}{N}\sum_i\Bar{X}_R(m,\vec{G_i}).
\end{equation}
\item[6.] Repeat steps 1-5 across a sufficient range of different lengths $m$ and fit $\Bar{X}_N(m)$ to the following decay function to get $f(\Lambda)$: \begin{equation}
\Bar{X}_N(m)=af(\Lambda)^m+b.
\end{equation}
\end{itemize}
\end{tcolorbox}
In summary, by random sampling, we obtain the survival probabilities of circuits for different lengths $m$ and fit them to an exponential decay function to get the average fidelity of the gate set $\mathbf{G}$ \begin{equation}
F_{avg}(\Lambda)=f(\Lambda)+\frac{1-f(\Lambda)}{d}.    
\end{equation} 
Note that the initial state $\rho$ and POVM measurement $\{Q,\mathbb{I}-Q\}$ are fixed.


\section{Optimizing circuit reusing times with circuit and noise information for generic cost models}\label{appendsc:opt}
In this part, we discuss the optimal reusing times for generic cost models. Recall that for a quantum learning task, the variance of the estimator is described as
\begin{equation}\label{eq:varianceappend}
\begin{split}
\mathbb{V}(\Bar{X}_N(R))=\frac{t(R)}{T_0}\left(\frac{Y}{R}+Z\right),
\end{split}
\end{equation}
where $Y$ and $Z$ are coefficients associated with measurement results.
Normally, the optimal reusing times can be determined by minimizing Eq.~\eqref{eq:varianceappend} once the specific form of cost model $t(R)$ is given. Here, we discuss a more general case where we only possess prior knowledge about the bound of the cost $t(R)$:
\begin{equation}
\beta_l R+\alpha_l \leq t(R)\leq \beta_u R+\alpha_u,\ \forall R\in\mathbb{Z}^+
\end{equation}
with constants $\alpha_l,\alpha_u,\beta_l,\beta_u$ all positive. This is the case that the real cost model $t(R)$ is bounded by two constant-cost models, as mentioned in the main text. Given the values of $Y$ and $Z$, we can identify the range of the optimal solution to minimize the variance, as shown in the following theorem.

\begin{theorem}\label{thm:boundopt}
Given the bound of the cost model $t(R)$, $\beta_l R+\alpha_l \leq t(R)\leq \beta_u R+\alpha_u$ with $\alpha_l,\alpha_u,\beta_l,\beta_u$ all positive, within fixed total time cost, the optimal reusing times $R^*$ to minimize the variance $\mathbb{V}(\Bar{X}_N(R))$ satisfies
\begin{equation}
R^*\in [\frac{b-\sqrt{b^2-4ac}}{2a}, \frac{b+\sqrt{b^2-4ac}}{2a}],
\end{equation}
with
\begin{equation}
\begin{split}
a &= \beta_l Z;\\
b &= (\sqrt{\alpha_u Z} + \sqrt{\beta_u Y})^2 - (\alpha_l Z + \beta_l Y);\\
c &= \alpha_l Y.
\end{split}
\end{equation}
\end{theorem}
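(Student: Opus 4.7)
The plan is to exploit the sandwich $\beta_l R + \alpha_l \leq t(R) \leq \beta_u R + \alpha_u$ from two directions: upper-bound the minimum achievable variance using the upper cost bound, and lower-bound the variance at $R^*$ using the lower cost bound. Combining these produces a quadratic inequality in $R^*$ whose solution set is the claimed interval.

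Concretely, I would first observe that since $R^*$ is optimal, $\mathbb{V}(\bar{X}_N(R^*)) \leq \mathbb{V}(\bar{X}_N(R))$ for every $R$. Applying this with the choice $R_u := \sqrt{\alpha_u Y/(\beta_u Z)}$, which is the minimizer of the upper constant-cost variance by Theorem~\ref{thm:optimalR}, and using $t(R_u) \leq \beta_u R_u + \alpha_u$ together with the AM-GM identity already invoked in Eq.~\eqref{eq:optvariance}, I obtain
\begin{equation}
\mathbb{V}(\bar{X}_N(R^*)) \leq \frac{1}{T_0}\bigl(\sqrt{\beta_u Y} + \sqrt{\alpha_u Z}\bigr)^2.
\end{equation}
In the other direction, using $t(R^*) \geq \beta_l R^* + \alpha_l$ in Eq.~\eqref{eq:variance} yields
\begin{equation}
\mathbb{V}(\bar{X}_N(R^*)) \geq \frac{\beta_l R^* + \alpha_l}{T_0}\left(\frac{Y}{R^*} + Z\right).
\end{equation}

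Chaining the two inequalities, clearing the denominator by multiplying through by $R^* T_0$, and expanding gives
\begin{equation}
\beta_l Y + \beta_l Z R^* + \frac{\alpha_l Y}{R^*} + \alpha_l Z \leq \bigl(\sqrt{\beta_u Y} + \sqrt{\alpha_u Z}\bigr)^2,
\end{equation}
which rearranges to
\begin{equation}
\beta_l Z (R^*)^2 - b\, R^* + \alpha_l Y \leq 0,
\end{equation}
with $b$ exactly as defined in the theorem. Solving this quadratic in $R^*$ and using $a = \beta_l Z > 0$ and $c = \alpha_l Y > 0$ immediately yields $R^* \in [(b - \sqrt{b^2-4ac})/(2a),\, (b + \sqrt{b^2-4ac})/(2a)]$.

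The only subtlety I anticipate is verifying that the discriminant $b^2 - 4ac$ is nonnegative, so that the interval is real and nonempty. This should follow for free from the existence of a minimizer: since the chained inequality above is satisfied by at least one positive $R^*$, the quadratic $aR^2 - bR + c$ must attain a nonpositive value somewhere, forcing $b^2 \geq 4ac$. A small sanity check is worth doing in the degenerate limit $\alpha_l = \alpha_u$, $\beta_l = \beta_u$, where $b$ reduces to $2\sqrt{\alpha\beta YZ}$ and the interval collapses to the single point $\sqrt{\alpha Y/(\beta Z)}$, recovering Theorem~\ref{thm:optimalR}.
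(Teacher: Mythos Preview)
Your proposal is correct and follows essentially the same route as the paper: upper-bound $\mathbb{V}(\bar{X}_N(R^*))$ via the upper cost envelope (the paper phrases this as $\min_R$ of the upper-bound expression, you phrase it as evaluating at the explicit minimizer $R_u$, but these are equivalent), lower-bound it at $R^*$ via the lower cost envelope, and then read off the quadratic inequality in $R^*$. Your added remarks on the nonnegativity of the discriminant and the degenerate-limit sanity check are welcome embellishments that the paper omits.
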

\begin{proof}
Since $t(R)\leq \beta_u R+\alpha_u$, we have
\begin{equation}
\mathbb{V}(\Bar{X}_N(R)) \leq \frac{\beta_u R+\alpha_u}{T_0}\left(\frac{Y}{R}+Z\right).
\end{equation}
Thus,
\begin{equation}
\begin{split}
\mathbb{V}(\Bar{X}_N(R^*)) &= \min_R \mathbb{V}(\Bar{X}_N(R))\\
&\leq \min_R\frac{\beta_u R+\alpha_u}{T_0}\left(\frac{Y}{R}+Z\right)\\
&= \frac{(\sqrt{\alpha_u Z} + \sqrt{\beta_u Y})^2}{T_0}.
\end{split}
\end{equation}
On the other hand, since $t(R)\geq \beta_l R+\alpha_l$, we have
\begin{equation}
\mathbb{V}(\Bar{X}_N(R)) \geq \frac{\beta_l R+\alpha_l}{T_0}\left(\frac{Y}{R}+Z\right).
\end{equation}
By substituting $R^*$ into the above inequality, we get
\begin{equation}
\frac{\beta_l R^*+\alpha_l}{T_0}\left(\frac{Y}{R^*}+Z\right) \leq \frac{(\sqrt{\alpha_u Z} +\sqrt{\beta_u Y})^2}{T_0}.
\end{equation}
The above inequality is equivalent to $R^*\in [\frac{b-\sqrt{b^2-4ac}}{2c}, \frac{b+\sqrt{b^2-4ac}}{2c}]$, which finishes the proof.
\end{proof}

Theorem~\ref{thm:boundopt} provides a bound for the optimal reusing times once the cost model is bounded. Note that when the cost model satisfies $t(R) = R + \alpha$ with a constant $\alpha$, the result of Theorem~\ref{thm:boundopt} becomes 
$R^*=\sqrt{\frac{\alpha Y}{Z}}$, which is consistent with Theorem~\ref{thm:optimalR}. In this sense, Theorem~\ref{thm:boundopt} can be viewed as a generalization of Theorem~\ref{thm:optimalR}.

\section{Analysis of optimal reusing times under various noise channels}\label{appendsc:Num_res}
In experiments, we can utilize prior knowledge about the noise channel to estimate optimal reusing values.
Here, we analyze the optimal reusing times $R^*$ for standard RB by numerical simulation across typical noise channels, specifically: 
\begin{equation}
\begin{split}
&\text{global depolarizing channel: }
\Lambda_{dep}(\rho)=p\rho+(1-p)\frac{\mathbb{I}}{2^n},\\
&\text{local amplitude-damping channel: } \Lambda_{la}=\Lambda_a^{\otimes n},\\
&\text{where }\Lambda_a(\rho)=K_0\rho K_0^{\dagger}+K_1\rho K_1^{\dagger},
K_0=\begin{pmatrix}
1&0\\
0&\sqrt{p}
\end{pmatrix},\ K_1=\begin{pmatrix}
0& \sqrt{1-p}\\
0&0
\end{pmatrix},\\
&\text{local phase-damping noise channel: }\Lambda_{lp}=\Lambda_p^{\otimes n},\\
&\text{where }\Lambda_p(\rho)=E_0\rho E_0^\dagger+E_1\rho E_1^\dagger,E_0=\begin{pmatrix}
1&0\\
0&\sqrt{p}
\end{pmatrix},
E_1=\begin{pmatrix}
0&0\\
0&\sqrt{1-p}
\end{pmatrix},\\
&\text{local unitary noise channel: } \Lambda_{lz}=\Lambda_z^{\otimes n},\text{where }\Lambda_z(\rho)=\exp(i\theta \mathcal{Z})(\rho)=i\sin\theta \mathcal{Z}(\rho)+\cos\theta\cdot\rho,
\end{split}
\end{equation}
where $\mathcal{Z}(\rho)=Z\rho Z^\dagger$, $n$ denotes the system qubit number, and $p\in[0,1],\theta\in[0,\pi]$ are related parameters of the channels. 
To cover multi-qubit noise channels, we also consider a ``correlation channel'': \begin{equation}
\Lambda_c = \bigotimes_{i<j}\exp\left(i\beta_{ij}\text{SWAP}_{ij}\right),
\end{equation}
where $\beta_{ij}\in[0,\pi]$ is the correlation parameter representing the interaction strength between qubits $i$ and $j$. These models can effectively approximate the vast majority of noises encountered in practical experiments. 

To depict the trend of $R^*$ under different kinds of noise channel parameters, we consider the case that reusing an existing circuit $R$ times cost $t(R) = R + 4$. In this case, $R_0=4$ stands as a $2$-optimal value according to Corollary \ref{cor:2opt_constant}, and the optimal reusing times reads $R^*=\sqrt{\frac{A-B}{B-A^2}}$. If the actual cost factor is not equal to $4$ but another $\alpha^*$, the optimal reusing times obtained from simulations can be adjusted by multiplying the factor $\sqrt{\alpha^*}$, as the theoretical optimal $R^*$ is proportional to $\sqrt{\alpha}$ for the constant-cost model $t(R) = R+\alpha$.
To determine $R^*$, one has to get the values of $A$ and $B$. The value of $A$ can be evaluated directly via noise channel fidelity. The value of $B$ is a quantity averaged over multiple random Clifford gate sequences. We uniformly sample $50000$ distinct Clifford circuits at random for each predefined length $m$ to simulate the value of $B$. Then, we can observe how much the optimal $R^*$ reusing strategy and the $2$-optimal $R_0$ reusing strategy improve compared to naively choosing $R_{naive}=1$. Within this section, for the $n$-qubit simulations the SPAM specifications are assigned as $\rho=\ketbra{0}^{\otimes n}$ and $Q=\ketbra{0}^{\otimes n}$. 



\subsection{Global depolarizing noise channel}
Depolarizing noise is an important environmental noise characterization for a quantum channel~\cite{nielsen2010quantum}.
Under the global depolarizing noise channel, abbreviated as ``depolarizing channel'' without ambiguity, we observe that the optimal reusing times $R^*\rightarrow\infty$ 
for the standard RB protocol. This occurs because the final state under application of arbitrary circuit $\vec{G}=\{G_1,G_2,\ldots,G_m,G_{inv}\}$, \begin{equation}
\begin{split}  \rho_f(\vec{G})&=\mathcal{G}_{inv}\Lambda_{dep}\mathcal{G}_m\Lambda_{dep}\ldots\Lambda_{dep}\mathcal{G}_1(\rho)\\
&=p\mathcal{G}_{inv}\Lambda_{dep}\mathcal{G}_m\Lambda_{dep}\ldots\Lambda_{dep}\mathcal{G}_2\mathcal{G}_1(\rho)+(1-p)\mathcal{G}_{inv}\Lambda_{dep}\mathcal{G}_m\Lambda_{dep}\ldots\Lambda_{dep}\mathcal{G}_2(\frac{\mathbb{I}}{2^n})\\
&=p^m\mathcal{G}_{inv}\mathcal{G}_m\ldots\mathcal{G}_1(\rho)+(1-p^m)(\frac{\mathbb{I}}{2^n})\\
&=p^m(\rho)+(1-p^m)(\frac{\mathbb{I}}{2^n}),
\end{split}
\end{equation}
is solely dependent on the noise channel parameter $p$ and circuit length $m$, where $\mathcal{G}_i$ represents the Liouville representation for $G_i$.
Consequently, the quality parameter can be extracted without the need for twirling over the entire gate set. Instead, one can simply reuse a single circuit $N$ times to estimate the noise channel and minimize the experimental cost.
The underlying reason for the result $R^*\rightarrow\infty$ is the commutative property of the global depolarizing channel with respect to quantum unitary operations. Hence, this result is not only applicable to the standard RB protocol but also to other RB protocols where the overall circuit becomes identity channel $\mathcal{I}$ for the noiseless case \cite{general}. 

However, experimental noise is typically not limited to depolarizing noise alone. To analyze the impact of depolarizing noise on the optimal reusing times $R^*$, we consider a composite noise channel comprising a local amplitude-damping channel with parameter $p_1$ alongside a depolarizing channel with parameter $p_2$: \begin{equation}\label{comp}
\Lambda_{comp2}(\rho)=\Lambda_{dep}(p_2)\circ\Lambda_{la}(p_1)(\rho),
\end{equation}
with fixed amplitude-damping parameter $p_1=0.99$. The simulation results for $R^*$ and comparison of variance across the optimal $R^*$, $2$-optimal $R_0$, and naive $R_{naive}$ are illustrated in Fig.~\ref{fig:V_comp1}. As $p_2$ approaches $0$, $R^*$ increases since the depolarizing part becomes more dominant, and it is known to us that $R^*\rightarrow\infty$ for the pure depolarizing channel. 
On the other hand, when $p_2$ is close to $1$, $R^*$ approaches the result under pure local amplitude-damping channel with given parameter $p_1$.

\begin{figure}[htbp]
\raggedright
\begin{minipage}[b]{0.9\linewidth}
\subfloat[]{
\includegraphics[width=7.8cm]{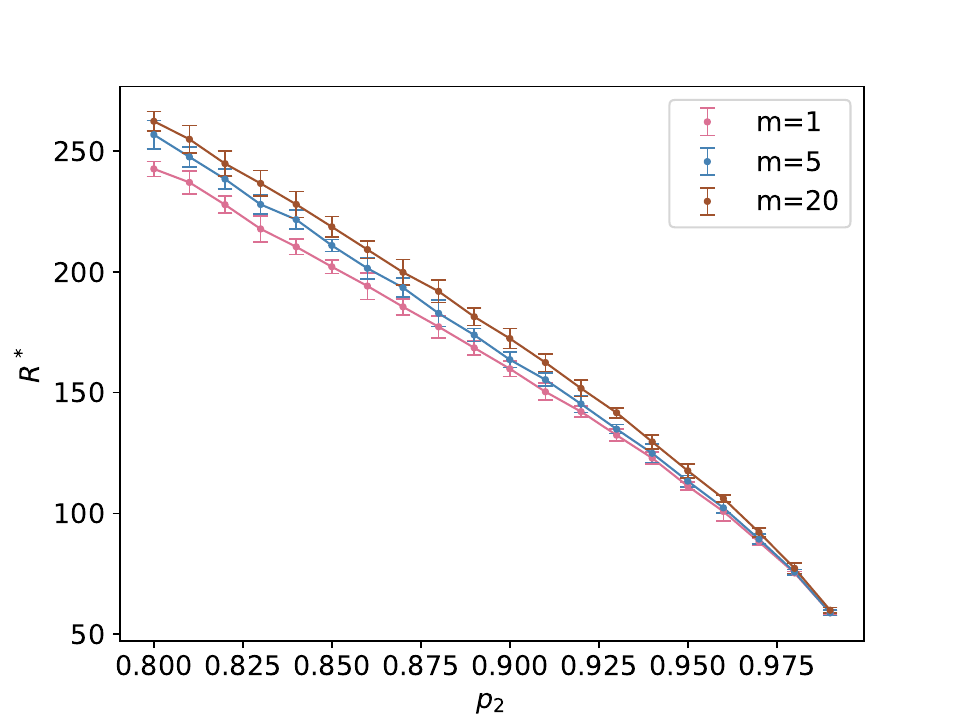}
}
\subfloat[]{
\includegraphics[width=7.8cm]{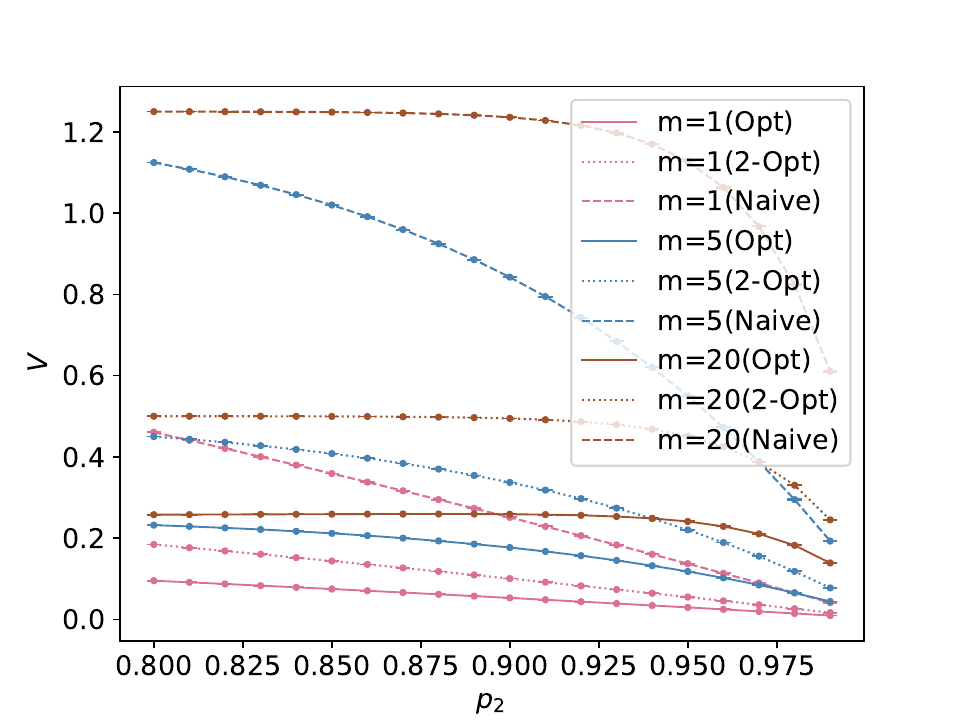}
}
\end{minipage}
\caption{Figure (a) presents the optimal reusing times $R^*$ for standard RB over composite noise channel $\Lambda_{comp2}$ defined by Eq.~\eqref{comp} with parameter $p_1=0.99$ and $p_2$ varying as the x-axis, while figure (b) shows a comparison between the variance of the estimator under the optimal $R^*$, under the $2$-optimal $R_0=4$, and under the naive $R_{naive}=1$ within same noise scenario.}
\label{fig:V_comp1}
\end{figure}

\subsection{Local unitary noise channel}
For the local unitary noise channel in standard RB, let us first consider an extreme case with the noise channel parameter $\theta=\frac{\pi}{2}$: 
$\Lambda_{lz}(\theta=\frac{\pi}{2})=(i\mathcal{Z})^{\otimes n}$. The final state under this channel can be described as \begin{equation}     
\begin{split}
\rho_f(\vec{G}, 
m)&=\mathcal{G}_{inv}\Lambda_{lz}\mathcal{G}_m\Lambda_{lz}\ldots\Lambda_{lz}\mathcal{G}_1(\rho)\\
&=(\bar{\mathcal{G}}_m^{\dagger}\mathcal{Z}^{\otimes n}\bar{\mathcal{G}}_m)(\bar{\mathcal{G}}_{m-1}^{\dagger}\mathcal{Z}^{\otimes n}\bar{\mathcal{G}}_{m-1})\ldots(\bar{\mathcal{G}}_1^{\dagger}\mathcal{Z}^{\otimes n}\bar{\mathcal{G}}_1)(\rho)\\
&=\mathcal{P}_m\mathcal{P}_{m-1}\ldots \mathcal{P}_1(\rho)\\
&=\mathcal{P}'(\rho),
\end{split} \end{equation}
where $\bar{\mathcal{G}}_i=\mathcal{G}_i\mathcal{G}_{i-1}\ldots\mathcal{G}_1$, and $\mathcal{P}_i(\rho)=P\rho P^\dagger$ for non-identity $n$-qubit Pauli operator $P\in \mathbf{P}_n\backslash \{\mathbb{I}\}$. This implies that each circuit $\vec{G}_i$ corresponds to a random Pauli operator $\mathcal{P}'$ under the special local unitary noise channel $\Lambda_{lz}(\theta=\frac{\pi}{2})$. As a result, the value of $\Tr(Q\rho_f(\vec{G}_i))$ is 
either $0$ or $1$, then \begin{equation}
\begin{split}
\Tr(Q\rho_f)&=\Tr(Q\rho_f)^2,\\
A=\mathbb{E}(\Tr(Q\rho_f))&=\mathbb{E}(\Tr(Q\rho_f)^2)=B,
\end{split}
\end{equation}
the reusing times $R^*$ has a value 
equal to $0$. 
On the other side, when $\theta\rightarrow0$, the noise channel is close to identity channel $\mathcal{I}$, $B\approx A^2\approx 1$, so we have $R^*\rightarrow\infty$.

\subsection{Two-qubit correlation noise channel}
As shown in Fig.~\ref{two_cor}, we investigate a $2$-qubit composite channel \begin{equation}\label{comp3}
\Lambda_{comp3}=\Lambda_c(\beta_{12})\circ\Lambda_{lp}(p_2)\circ\Lambda_{la}(p_1).
\end{equation}

\begin{figure}[htbp]
\raggedright
\begin{minipage}[b]{0.9\linewidth}
\subfloat[]{
\includegraphics[width=7.8cm]{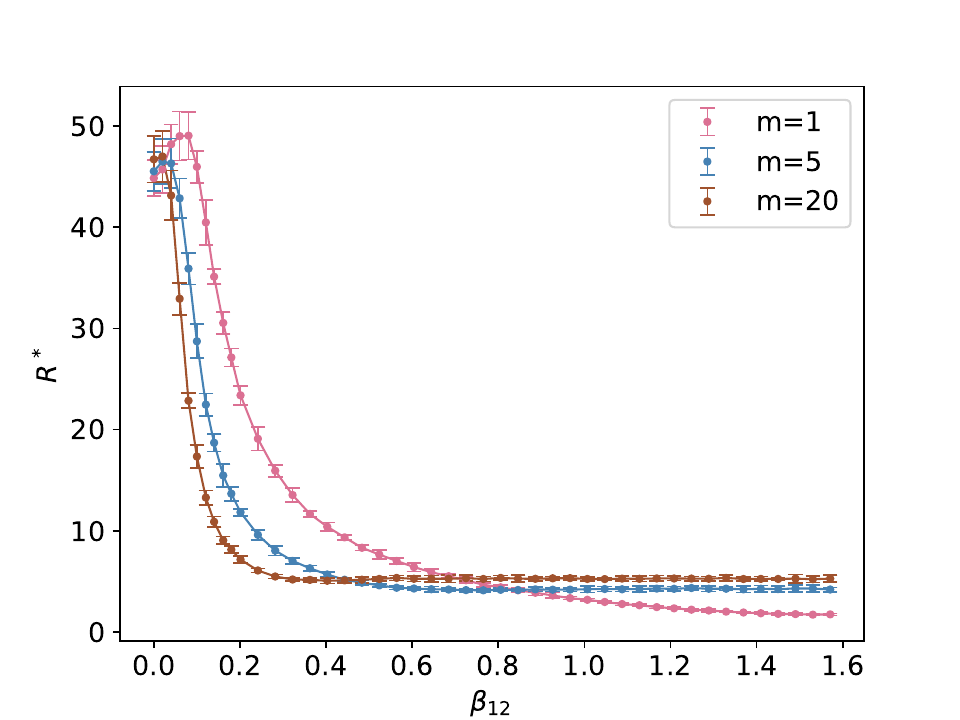}
}
\subfloat[]{
\includegraphics[width=7.8cm]{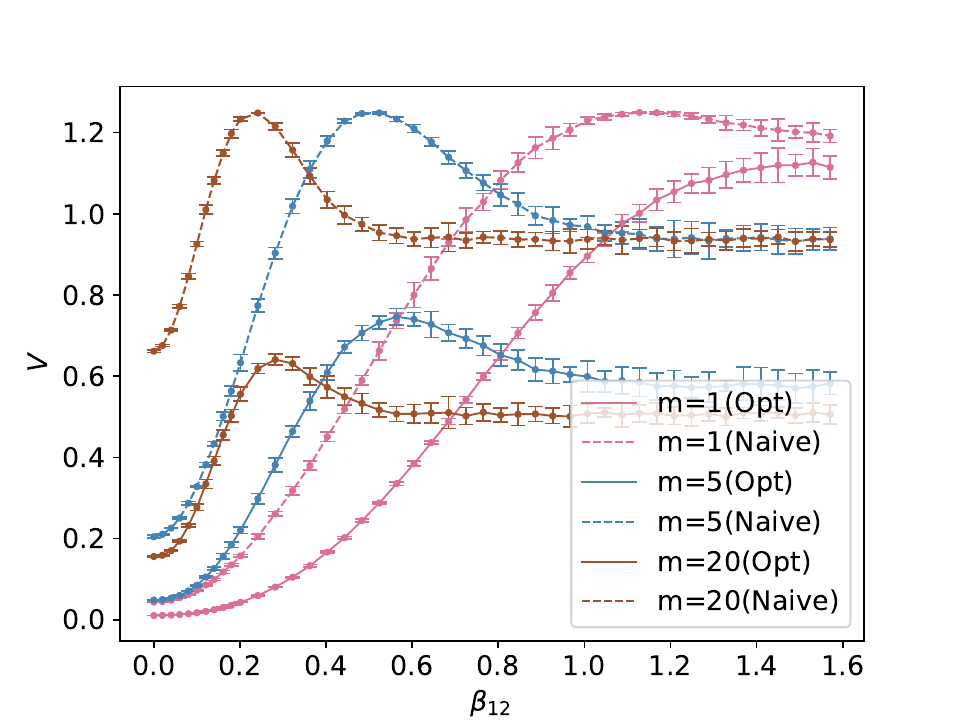}
}
\end{minipage}
\caption{Figure (a) presents the optimal reusing times $R^*$ for the $2$-qubit standard RB protocol over composite noise channel $\Lambda_{comp3}$ defined by Eq.~\eqref{comp3}. Figure (b) presents a comparison between the variance under the optimal $R^*$ and that under the naive $R_{naive}=1$. We omit the performance of the 2-optimal strategy $R_0=4$ in the figures due to its variance being nearly identical to that of $R^*$, which would otherwise congest the visualization. The results under the $\Lambda_{comp3}$ channel exhibit symmetry with respect to $\beta_{12}=\frac{\pi}{2}$; hence, only the range of $[0, \frac{\pi}{2}]$ for $\beta_{12}$ is depicted in the figure.}
\label{two_cor}
\end{figure}

The parameter $\beta_{12}$ controls the noise strength of the interaction between two subsystems. The amplitude-damping and phase-damping parameters are assigned as $p_1=0.999$ and $p_2=0.99$, respectively. When the noise parameter $\beta_{12}$ increases, the value of $R^*$ decreases. This occurs because as $\beta_{12}$ approaches $\frac{\pi}{2}$, the noise induced by $\Lambda_c(\beta_{12})$ intensifies, necessitating more twirling to learn about the noise channel. Note that in Fig.~\ref{two_cor}, $R^*$ first slightly increases and then decreases as $\beta_{12}$ approaching $\frac{\pi}{2}$. The increase results from the influence of another two types of composite noises $\Lambda_{la}$ and $\Lambda_{lp}$. In the case where the noise channel is exclusively the correlation channel $\Lambda_c$, $R^*$ will exhibit a monotonic decrease as $\beta_{12}$ increases, as illustrated in Fig.~\ref{fig:exclu_cor}.

\begin{figure}
    \raggedright
    \begin{minipage}[b]{0.9\linewidth}
    \subfloat[]{
    \includegraphics[width=7.8cm]{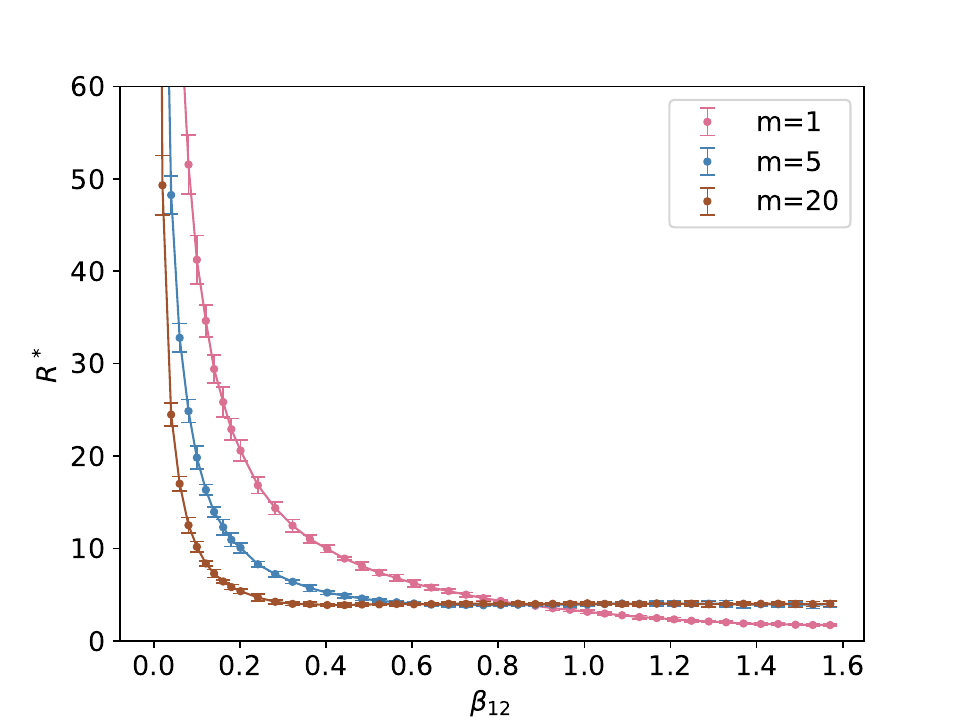}
    }
    \subfloat[]{
    \includegraphics[width=7.8cm]{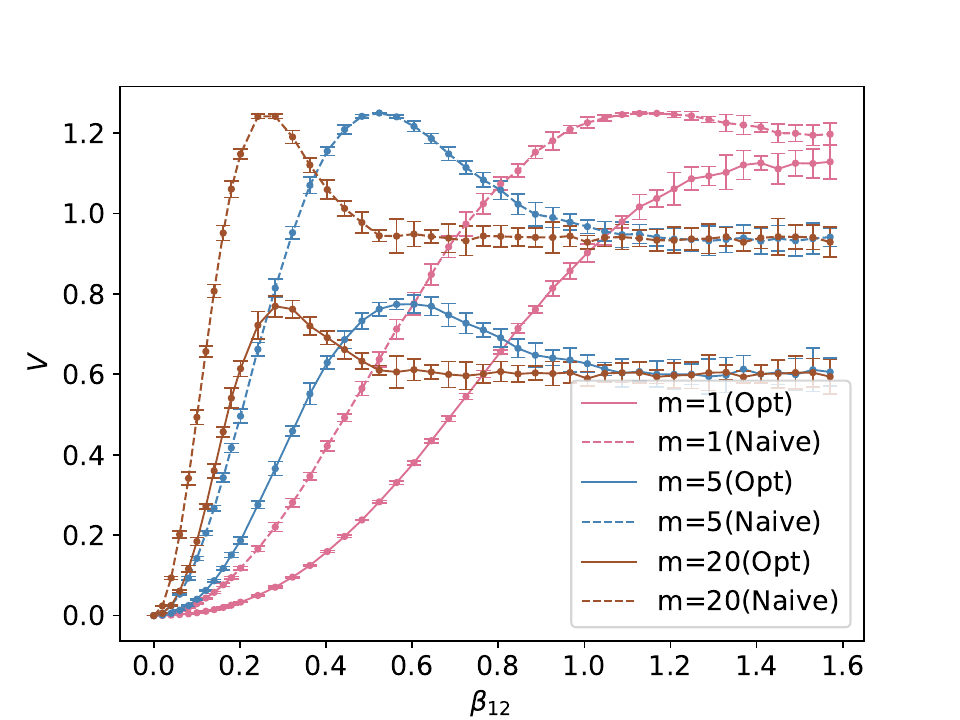}
    }
    \end{minipage}
    \caption{Figure (a) illustrates the values of the optimal reusing times $R^*$ under the correlation channel $\Lambda_c$. Figure (b) presents a comparison between the variance under the optimal $R^*$ and that under the naive $R_{naive}=1$.}
    \label{fig:exclu_cor}
\end{figure}
\bibliographystyle{apsrev}

\bibliography{bibThriftyRB.bib}

\begin{thebibliography}{36}
\expandafter\ifx\csname natexlab\endcsname\relax\def\natexlab#1{#1}\fi
\expandafter\ifx\csname bibnamefont\endcsname\relax
  \def\bibnamefont#1{#1}\fi
\expandafter\ifx\csname bibfnamefont\endcsname\relax
  \def\bibfnamefont#1{#1}\fi
\expandafter\ifx\csname citenamefont\endcsname\relax
  \def\citenamefont#1{#1}\fi
\expandafter\ifx\csname url\endcsname\relax
  \def\url#1{\texttt{#1}}\fi
\expandafter\ifx\csname urlprefix\endcsname\relax\def\urlprefix{URL }\fi
\providecommand{\bibinfo}[2]{#2}
\providecommand{\eprint}[2][]{\url{#2}}

\bibitem[{\citenamefont{Cross et~al.}(2015)\citenamefont{Cross, Smith, and Smolin}}]{cross2015quantum}
\bibinfo{author}{\bibfnamefont{A.~W.} \bibnamefont{Cross}}, \bibinfo{author}{\bibfnamefont{G.}~\bibnamefont{Smith}}, \bibnamefont{and} \bibinfo{author}{\bibfnamefont{J.~A.} \bibnamefont{Smolin}}, \bibinfo{journal}{Phys. Rev. A} \textbf{\bibinfo{volume}{92}}, \bibinfo{pages}{012327} (\bibinfo{year}{2015}), \urlprefix\url{https://doi.org/10.1103/PhysRevA.92.012327}.

\bibitem[{\citenamefont{Huang et~al.}(2022)\citenamefont{Huang, Broughton, Cotler, Chen, Li, Mohseni, Neven, Babbush, Kueng, Preskill et~al.}}]{huang2022quantum}
\bibinfo{author}{\bibfnamefont{H.-Y.} \bibnamefont{Huang}}, \bibinfo{author}{\bibfnamefont{M.}~\bibnamefont{Broughton}}, \bibinfo{author}{\bibfnamefont{J.}~\bibnamefont{Cotler}}, \bibinfo{author}{\bibfnamefont{S.}~\bibnamefont{Chen}}, \bibinfo{author}{\bibfnamefont{J.}~\bibnamefont{Li}}, \bibinfo{author}{\bibfnamefont{M.}~\bibnamefont{Mohseni}}, \bibinfo{author}{\bibfnamefont{H.}~\bibnamefont{Neven}}, \bibinfo{author}{\bibfnamefont{R.}~\bibnamefont{Babbush}}, \bibinfo{author}{\bibfnamefont{R.}~\bibnamefont{Kueng}}, \bibinfo{author}{\bibfnamefont{J.}~\bibnamefont{Preskill}}, \bibnamefont{et~al.}, \bibinfo{journal}{Science} \textbf{\bibinfo{volume}{376}}, \bibinfo{pages}{1182} (\bibinfo{year}{2022}), \eprint{https://www.science.org/doi/pdf/10.1126/science.abn7293}, \urlprefix\url{https://doi.org/10.1126/science.abn7293}.

\bibitem[{\citenamefont{Huang et~al.}(2020{\natexlab{a}})\citenamefont{Huang, Kueng, and Preskill}}]{huang2020predicting}
\bibinfo{author}{\bibfnamefont{H.-Y.} \bibnamefont{Huang}}, \bibinfo{author}{\bibfnamefont{R.}~\bibnamefont{Kueng}}, \bibnamefont{and} \bibinfo{author}{\bibfnamefont{J.}~\bibnamefont{Preskill}}, \bibinfo{journal}{Nature Physics} \textbf{\bibinfo{volume}{16}}, \bibinfo{pages}{1050} (\bibinfo{year}{2020}{\natexlab{a}}), ISSN \bibinfo{issn}{1745-2481}, \urlprefix\url{https://doi.org/10.1038/s41567-020-0932-7}.

\bibitem[{\citenamefont{Paris and Rehacek}(2004)}]{paris2004quantum}
\bibinfo{author}{\bibfnamefont{M.}~\bibnamefont{Paris}} \bibnamefont{and} \bibinfo{author}{\bibfnamefont{J.}~\bibnamefont{Rehacek}}, \emph{\bibinfo{title}{Quantum state estimation}}, vol. \bibinfo{volume}{649} (\bibinfo{publisher}{Springer Science \& Business Media}, \bibinfo{year}{2004}), \urlprefix\url{https://doi.org/10.1007/b98673}.

\bibitem[{\citenamefont{Adamson and Steinberg}(2010)}]{adamson2010improving}
\bibinfo{author}{\bibfnamefont{R.~B.~A.} \bibnamefont{Adamson}} \bibnamefont{and} \bibinfo{author}{\bibfnamefont{A.~M.} \bibnamefont{Steinberg}}, \bibinfo{journal}{Phys. Rev. Lett.} \textbf{\bibinfo{volume}{105}}, \bibinfo{pages}{030406} (\bibinfo{year}{2010}), \urlprefix\url{https://doi.org/10.1103/PhysRevLett.105.030406}.

\bibitem[{\citenamefont{Aaronson}(2018)}]{Aaronson2018shadow}
\bibinfo{author}{\bibfnamefont{S.}~\bibnamefont{Aaronson}}, in \emph{\bibinfo{booktitle}{Proceedings of the 50th Annual ACM SIGACT Symposium on Theory of Computing}} (\bibinfo{publisher}{Association for Computing Machinery}, \bibinfo{address}{New York, NY, USA}, \bibinfo{year}{2018}), STOC 2018, p. \bibinfo{pages}{325–338}, ISBN \bibinfo{isbn}{9781450355599}, \urlprefix\url{https://doi.org/10.1145/3188745.3188802}.

\bibitem[{\citenamefont{Wiebe et~al.}(2014)\citenamefont{Wiebe, Granade, Ferrie, and Cory}}]{wiebe2014hamiltonian}
\bibinfo{author}{\bibfnamefont{N.}~\bibnamefont{Wiebe}}, \bibinfo{author}{\bibfnamefont{C.}~\bibnamefont{Granade}}, \bibinfo{author}{\bibfnamefont{C.}~\bibnamefont{Ferrie}}, \bibnamefont{and} \bibinfo{author}{\bibfnamefont{D.~G.} \bibnamefont{Cory}}, \bibinfo{journal}{Phys. Rev. Lett.} \textbf{\bibinfo{volume}{112}}, \bibinfo{pages}{190501} (\bibinfo{year}{2014}), \urlprefix\url{https://doi.org/10.1103/PhysRevLett.112.190501}.

\bibitem[{\citenamefont{Flammia and Wallman}(2020)}]{flammia2020efficient}
\bibinfo{author}{\bibfnamefont{S.~T.} \bibnamefont{Flammia}} \bibnamefont{and} \bibinfo{author}{\bibfnamefont{J.~J.} \bibnamefont{Wallman}}, \bibinfo{journal}{ACM Transactions on Quantum Computing} \textbf{\bibinfo{volume}{1}} (\bibinfo{year}{2020}), \urlprefix\url{https://doi.org/10.1145/3408039}.

\bibitem[{\citenamefont{Flammia and O'Donnell}(2021)}]{flammia2021pauli}
\bibinfo{author}{\bibfnamefont{S.~T.} \bibnamefont{Flammia}} \bibnamefont{and} \bibinfo{author}{\bibfnamefont{R.}~\bibnamefont{O'Donnell}}, \bibinfo{journal}{{Quantum}} \textbf{\bibinfo{volume}{5}}, \bibinfo{pages}{549} (\bibinfo{year}{2021}), ISSN \bibinfo{issn}{2521-327X}, \urlprefix\url{https://doi.org/10.22331/q-2021-09-23-549}.

\bibitem[{\citenamefont{Urbina and Richter}(2013)}]{urbina2013random_state}
\bibinfo{author}{\bibfnamefont{J.-D.} \bibnamefont{Urbina}} \bibnamefont{and} \bibinfo{author}{\bibfnamefont{K.}~\bibnamefont{Richter}}, \bibinfo{journal}{Advances in Physics} \textbf{\bibinfo{volume}{62}}, \bibinfo{pages}{363} (\bibinfo{year}{2013}), \urlprefix\url{https://doi.org/10.1080/00018732.2013.860277}.

\bibitem[{\citenamefont{Emerson et~al.}(2005)\citenamefont{Emerson, Alicki, and Życzkowski}}]{RB2005}
\bibinfo{author}{\bibfnamefont{J.}~\bibnamefont{Emerson}}, \bibinfo{author}{\bibfnamefont{R.}~\bibnamefont{Alicki}}, \bibnamefont{and} \bibinfo{author}{\bibfnamefont{K.}~\bibnamefont{Życzkowski}}, \bibinfo{journal}{Journal of Optics B: Quantum and Semiclassical Optics} \textbf{\bibinfo{volume}{7}}, \bibinfo{pages}{S347} (\bibinfo{year}{2005}), \urlprefix\url{https://doi.org/10.1088/1464-4266/7/10/021}.

\bibitem[{\citenamefont{Knill et~al.}(2008)\citenamefont{Knill, Leibfried, Reichle, Britton, Blakestad, Jost, Langer, Ozeri, Seidelin, and Wineland}}]{RB2008}
\bibinfo{author}{\bibfnamefont{E.}~\bibnamefont{Knill}}, \bibinfo{author}{\bibfnamefont{D.}~\bibnamefont{Leibfried}}, \bibinfo{author}{\bibfnamefont{R.}~\bibnamefont{Reichle}}, \bibinfo{author}{\bibfnamefont{J.}~\bibnamefont{Britton}}, \bibinfo{author}{\bibfnamefont{R.~B.} \bibnamefont{Blakestad}}, \bibinfo{author}{\bibfnamefont{J.~D.} \bibnamefont{Jost}}, \bibinfo{author}{\bibfnamefont{C.}~\bibnamefont{Langer}}, \bibinfo{author}{\bibfnamefont{R.}~\bibnamefont{Ozeri}}, \bibinfo{author}{\bibfnamefont{S.}~\bibnamefont{Seidelin}}, \bibnamefont{and} \bibinfo{author}{\bibfnamefont{D.~J.} \bibnamefont{Wineland}}, \bibinfo{journal}{Phys. Rev. A} \textbf{\bibinfo{volume}{77}}, \bibinfo{pages}{012307} (\bibinfo{year}{2008}), \urlprefix\url{https://doi.org/10.1103/PhysRevA.77.012307}.

\bibitem[{\citenamefont{Huang et~al.}(2020{\natexlab{b}})\citenamefont{Huang, Kueng, and Preskill}}]{Classical_shadow}
\bibinfo{author}{\bibfnamefont{H.-Y.} \bibnamefont{Huang}}, \bibinfo{author}{\bibfnamefont{R.}~\bibnamefont{Kueng}}, \bibnamefont{and} \bibinfo{author}{\bibfnamefont{J.}~\bibnamefont{Preskill}}, \bibinfo{journal}{Nature Physics} \textbf{\bibinfo{volume}{16}}, \bibinfo{pages}{1050} (\bibinfo{year}{2020}{\natexlab{b}}), ISSN \bibinfo{issn}{1745-2481}, \urlprefix\url{https://doi.org/10.1038/s41567-020-0932-7}.

\bibitem[{\citenamefont{Elben et~al.}(2018)\citenamefont{Elben, Vermersch, Dalmonte, Cirac, and Zoller}}]{Elben2018Random}
\bibinfo{author}{\bibfnamefont{A.}~\bibnamefont{Elben}}, \bibinfo{author}{\bibfnamefont{B.}~\bibnamefont{Vermersch}}, \bibinfo{author}{\bibfnamefont{M.}~\bibnamefont{Dalmonte}}, \bibinfo{author}{\bibfnamefont{J.~I.} \bibnamefont{Cirac}}, \bibnamefont{and} \bibinfo{author}{\bibfnamefont{P.}~\bibnamefont{Zoller}}, \bibinfo{journal}{Phys. Rev. Lett.} \textbf{\bibinfo{volume}{120}}, \bibinfo{pages}{050406} (\bibinfo{year}{2018}), \urlprefix\url{https://doi.org/10.1103/PhysRevLett.120.050406}.

\bibitem[{\citenamefont{Brydges et~al.}(2019)\citenamefont{Brydges, Elben, Jurcevic, Vermersch, Maier, Lanyon, Zoller, Blatt, and Roos}}]{Tiff2019randomized}
\bibinfo{author}{\bibfnamefont{T.}~\bibnamefont{Brydges}}, \bibinfo{author}{\bibfnamefont{A.}~\bibnamefont{Elben}}, \bibinfo{author}{\bibfnamefont{P.}~\bibnamefont{Jurcevic}}, \bibinfo{author}{\bibfnamefont{B.}~\bibnamefont{Vermersch}}, \bibinfo{author}{\bibfnamefont{C.}~\bibnamefont{Maier}}, \bibinfo{author}{\bibfnamefont{B.~P.} \bibnamefont{Lanyon}}, \bibinfo{author}{\bibfnamefont{P.}~\bibnamefont{Zoller}}, \bibinfo{author}{\bibfnamefont{R.}~\bibnamefont{Blatt}}, \bibnamefont{and} \bibinfo{author}{\bibfnamefont{C.~F.} \bibnamefont{Roos}}, \bibinfo{journal}{Science} \textbf{\bibinfo{volume}{364}}, \bibinfo{pages}{260} (\bibinfo{year}{2019}), \eprint{https://www.science.org/doi/pdf/10.1126/science.aau4963}, \urlprefix\url{https://doi.org/10.1126/science.aau4963}.

\bibitem[{\citenamefont{Elben et~al.}(2023)\citenamefont{Elben, Flammia, Huang, Kueng, Preskill, Vermersch, and Zoller}}]{elben2023randomized_meas}
\bibinfo{author}{\bibfnamefont{A.}~\bibnamefont{Elben}}, \bibinfo{author}{\bibfnamefont{S.~T.} \bibnamefont{Flammia}}, \bibinfo{author}{\bibfnamefont{H.-Y.} \bibnamefont{Huang}}, \bibinfo{author}{\bibfnamefont{R.}~\bibnamefont{Kueng}}, \bibinfo{author}{\bibfnamefont{J.}~\bibnamefont{Preskill}}, \bibinfo{author}{\bibfnamefont{B.}~\bibnamefont{Vermersch}}, \bibnamefont{and} \bibinfo{author}{\bibfnamefont{P.}~\bibnamefont{Zoller}}, \bibinfo{journal}{Nature Reviews Physics} \textbf{\bibinfo{volume}{5}}, \bibinfo{pages}{9} (\bibinfo{year}{2023}), ISSN \bibinfo{issn}{2522-5820}, \urlprefix\url{https://doi.org/10.1038/s42254-022-00535-2}.

\bibitem[{\citenamefont{Huggins et~al.}(2022)\citenamefont{Huggins, O'Gorman, Rubin, Reichman, Babbush, and Lee}}]{Huggins2022}
\bibinfo{author}{\bibfnamefont{W.~J.} \bibnamefont{Huggins}}, \bibinfo{author}{\bibfnamefont{B.~A.} \bibnamefont{O'Gorman}}, \bibinfo{author}{\bibfnamefont{N.~C.} \bibnamefont{Rubin}}, \bibinfo{author}{\bibfnamefont{D.~R.} \bibnamefont{Reichman}}, \bibinfo{author}{\bibfnamefont{R.}~\bibnamefont{Babbush}}, \bibnamefont{and} \bibinfo{author}{\bibfnamefont{J.}~\bibnamefont{Lee}}, \bibinfo{journal}{Nature} \textbf{\bibinfo{volume}{603}}, \bibinfo{pages}{416} (\bibinfo{year}{2022}), ISSN \bibinfo{issn}{1476-4687}, \urlprefix\url{https://doi.org/10.1038/s41586-021-04351-z}.

\bibitem[{\citenamefont{Wallman and Flammia}(2014)}]{RB_with_confidence}
\bibinfo{author}{\bibfnamefont{J.~J.} \bibnamefont{Wallman}} \bibnamefont{and} \bibinfo{author}{\bibfnamefont{S.~T.} \bibnamefont{Flammia}}, \bibinfo{journal}{New Journal of Physics} \textbf{\bibinfo{volume}{16}}, \bibinfo{pages}{103032} (\bibinfo{year}{2014}), \urlprefix\url{https://doi.org/10.1088/1367-2630/16/10/103032}.

\bibitem[{\citenamefont{Helsen and Walter}(2023)}]{Thrifty_shadow}
\bibinfo{author}{\bibfnamefont{J.}~\bibnamefont{Helsen}} \bibnamefont{and} \bibinfo{author}{\bibfnamefont{M.}~\bibnamefont{Walter}}, \bibinfo{journal}{Phys. Rev. Lett.} \textbf{\bibinfo{volume}{131}}, \bibinfo{pages}{240602} (\bibinfo{year}{2023}), \urlprefix\url{https://doi.org/10.1103/PhysRevLett.131.240602}.

\bibitem[{\citenamefont{Zhou and Liu}(2023)}]{zhou2023performance}
\bibinfo{author}{\bibfnamefont{Y.}~\bibnamefont{Zhou}} \bibnamefont{and} \bibinfo{author}{\bibfnamefont{Q.}~\bibnamefont{Liu}}, \bibinfo{journal}{{Quantum}} \textbf{\bibinfo{volume}{7}}, \bibinfo{pages}{1044} (\bibinfo{year}{2023}), ISSN \bibinfo{issn}{2521-327X}, \urlprefix\url{https://doi.org/10.22331/q-2023-06-29-1044}.

\bibitem[{\citenamefont{Magesan et~al.}(2012)\citenamefont{Magesan, Gambetta, and Emerson}}]{RB2012}
\bibinfo{author}{\bibfnamefont{E.}~\bibnamefont{Magesan}}, \bibinfo{author}{\bibfnamefont{J.~M.} \bibnamefont{Gambetta}}, \bibnamefont{and} \bibinfo{author}{\bibfnamefont{J.}~\bibnamefont{Emerson}}, \bibinfo{journal}{Phys. Rev. A} \textbf{\bibinfo{volume}{85}}, \bibinfo{pages}{042311} (\bibinfo{year}{2012}), \urlprefix\url{https://doi.org/10.1103/PhysRevA.85.042311}.

\bibitem[{\citenamefont{da~Silva et~al.}(2011)\citenamefont{da~Silva, Landon-Cardinal, and Poulin}}]{da2011practical}
\bibinfo{author}{\bibfnamefont{M.~P.} \bibnamefont{da~Silva}}, \bibinfo{author}{\bibfnamefont{O.}~\bibnamefont{Landon-Cardinal}}, \bibnamefont{and} \bibinfo{author}{\bibfnamefont{D.}~\bibnamefont{Poulin}}, \bibinfo{journal}{Phys. Rev. Lett.} \textbf{\bibinfo{volume}{107}}, \bibinfo{pages}{210404} (\bibinfo{year}{2011}), \urlprefix\url{https://doi.org/10.1103/PhysRevLett.107.210404}.

\bibitem[{\citenamefont{Chuang and Nielsen}(1997)}]{chuang1997prescription}
\bibinfo{author}{\bibfnamefont{I.~L.} \bibnamefont{Chuang}} \bibnamefont{and} \bibinfo{author}{\bibfnamefont{M.~A.} \bibnamefont{Nielsen}}, \bibinfo{journal}{Journal of Modern Optics} \textbf{\bibinfo{volume}{44}}, \bibinfo{pages}{2455} (\bibinfo{year}{1997}), \eprint{https://www.tandfonline.com/doi/pdf/10.1080/09500349708231894}, \urlprefix\url{https://doi.org/10.1080/09500349708231894}.

\bibitem[{\citenamefont{Poyatos et~al.}(1997)\citenamefont{Poyatos, Cirac, and Zoller}}]{poyatos1997complete}
\bibinfo{author}{\bibfnamefont{J.~F.} \bibnamefont{Poyatos}}, \bibinfo{author}{\bibfnamefont{J.~I.} \bibnamefont{Cirac}}, \bibnamefont{and} \bibinfo{author}{\bibfnamefont{P.}~\bibnamefont{Zoller}}, \bibinfo{journal}{Phys. Rev. Lett.} \textbf{\bibinfo{volume}{78}}, \bibinfo{pages}{390} (\bibinfo{year}{1997}), \urlprefix\url{https://doi.org/10.1103/PhysRevLett.78.390}.

\bibitem[{\citenamefont{Cai et~al.}(2023)\citenamefont{Cai, Babbush, Benjamin, Endo, Huggins, Li, McClean, and O'Brien}}]{cai2023quantum_error_mitigation}
\bibinfo{author}{\bibfnamefont{Z.}~\bibnamefont{Cai}}, \bibinfo{author}{\bibfnamefont{R.}~\bibnamefont{Babbush}}, \bibinfo{author}{\bibfnamefont{S.~C.} \bibnamefont{Benjamin}}, \bibinfo{author}{\bibfnamefont{S.}~\bibnamefont{Endo}}, \bibinfo{author}{\bibfnamefont{W.~J.} \bibnamefont{Huggins}}, \bibinfo{author}{\bibfnamefont{Y.}~\bibnamefont{Li}}, \bibinfo{author}{\bibfnamefont{J.~R.} \bibnamefont{McClean}}, \bibnamefont{and} \bibinfo{author}{\bibfnamefont{T.~E.} \bibnamefont{O'Brien}}, \bibinfo{journal}{Rev. Mod. Phys.} \textbf{\bibinfo{volume}{95}}, \bibinfo{pages}{045005} (\bibinfo{year}{2023}), \urlprefix\url{https://doi.org/10.1103/RevModPhys.95.045005}.

\bibitem[{\citenamefont{Gühne and Tóth}(2009)}]{entanglement}
\bibinfo{author}{\bibfnamefont{O.}~\bibnamefont{Gühne}} \bibnamefont{and} \bibinfo{author}{\bibfnamefont{G.}~\bibnamefont{Tóth}}, \bibinfo{journal}{Physics Reports} \textbf{\bibinfo{volume}{474}}, \bibinfo{pages}{1} (\bibinfo{year}{2009}), ISSN \bibinfo{issn}{0370-1573}, \urlprefix\url{https://doi.org/10.1016/j.physrep.2009.02.004}.

\bibitem[{\citenamefont{Li et~al.}(2022)\citenamefont{Li, Fan, Gong, Ye, Chen, Wu, Guan, Deng, Rong, Huang et~al.}}]{Shaowei2022CZ}
\bibinfo{author}{\bibfnamefont{S.}~\bibnamefont{Li}}, \bibinfo{author}{\bibfnamefont{D.}~\bibnamefont{Fan}}, \bibinfo{author}{\bibfnamefont{M.}~\bibnamefont{Gong}}, \bibinfo{author}{\bibfnamefont{Y.}~\bibnamefont{Ye}}, \bibinfo{author}{\bibfnamefont{X.}~\bibnamefont{Chen}}, \bibinfo{author}{\bibfnamefont{Y.}~\bibnamefont{Wu}}, \bibinfo{author}{\bibfnamefont{H.}~\bibnamefont{Guan}}, \bibinfo{author}{\bibfnamefont{H.}~\bibnamefont{Deng}}, \bibinfo{author}{\bibfnamefont{H.}~\bibnamefont{Rong}}, \bibinfo{author}{\bibfnamefont{H.-L.} \bibnamefont{Huang}}, \bibnamefont{et~al.}, \bibinfo{journal}{Chin. Phys. Lett.} \textbf{\bibinfo{volume}{39}}, \bibinfo{pages}{030302} (\bibinfo{year}{2022}), \urlprefix\url{http://doi.org/10.1088/0256-307X/39/3/030302}.

\bibitem[{\citenamefont{Dankert et~al.}(2009)\citenamefont{Dankert, Cleve, Emerson, and Livine}}]{dankert2009exact}
\bibinfo{author}{\bibfnamefont{C.}~\bibnamefont{Dankert}}, \bibinfo{author}{\bibfnamefont{R.}~\bibnamefont{Cleve}}, \bibinfo{author}{\bibfnamefont{J.}~\bibnamefont{Emerson}}, \bibnamefont{and} \bibinfo{author}{\bibfnamefont{E.}~\bibnamefont{Livine}}, \bibinfo{journal}{Phys. Rev. A} \textbf{\bibinfo{volume}{80}}, \bibinfo{pages}{012304} (\bibinfo{year}{2009}), \urlprefix\url{https://doi.org/10.1103/PhysRevA.80.012304}.

\bibitem[{\citenamefont{Helsen et~al.}(2022)\citenamefont{Helsen, Roth, Onorati, Werner, and Eisert}}]{general}
\bibinfo{author}{\bibfnamefont{J.}~\bibnamefont{Helsen}}, \bibinfo{author}{\bibfnamefont{I.}~\bibnamefont{Roth}}, \bibinfo{author}{\bibfnamefont{E.}~\bibnamefont{Onorati}}, \bibinfo{author}{\bibfnamefont{A.}~\bibnamefont{Werner}}, \bibnamefont{and} \bibinfo{author}{\bibfnamefont{J.}~\bibnamefont{Eisert}}, \bibinfo{journal}{PRX Quantum} \textbf{\bibinfo{volume}{3}}, \bibinfo{pages}{020357} (\bibinfo{year}{2022}), \urlprefix\url{https://doi.org/10.1103/PRXQuantum.3.020357}.

\bibitem[{\citenamefont{Epstein et~al.}(2014)\citenamefont{Epstein, Cross, Magesan, and Gambetta}}]{epstein2014investigating}
\bibinfo{author}{\bibfnamefont{J.~M.} \bibnamefont{Epstein}}, \bibinfo{author}{\bibfnamefont{A.~W.} \bibnamefont{Cross}}, \bibinfo{author}{\bibfnamefont{E.}~\bibnamefont{Magesan}}, \bibnamefont{and} \bibinfo{author}{\bibfnamefont{J.~M.} \bibnamefont{Gambetta}}, \bibinfo{journal}{Phys. Rev. A} \textbf{\bibinfo{volume}{89}}, \bibinfo{pages}{062321} (\bibinfo{year}{2014}), \urlprefix\url{https://doi.org/10.1103/PhysRevA.89.062321}.

\bibitem[{\citenamefont{Helsen et~al.}(2018)\citenamefont{Helsen, Wallman, and Wehner}}]{multi-qubit_Clifford_representations}
\bibinfo{author}{\bibfnamefont{J.}~\bibnamefont{Helsen}}, \bibinfo{author}{\bibfnamefont{J.~J.} \bibnamefont{Wallman}}, \bibnamefont{and} \bibinfo{author}{\bibfnamefont{S.}~\bibnamefont{Wehner}}, \bibinfo{journal}{Journal of Mathematical Physics} \textbf{\bibinfo{volume}{59}}, \bibinfo{pages}{072201} (\bibinfo{year}{2018}), ISSN \bibinfo{issn}{0022-2488}, \eprint{https://pubs.aip.org/aip/jmp/article-pdf/doi/10.1063/1.4997688/15970851/072201\_1\_online.pdf}, \urlprefix\url{https://doi.org/10.1063/1.4997688}.

\bibitem[{\citenamefont{Helsen et~al.}(2019)\citenamefont{Helsen, Wallman, Flammia, and Wehner}}]{helsen2019UsingFewSamples}
\bibinfo{author}{\bibfnamefont{J.}~\bibnamefont{Helsen}}, \bibinfo{author}{\bibfnamefont{J.~J.} \bibnamefont{Wallman}}, \bibinfo{author}{\bibfnamefont{S.~T.} \bibnamefont{Flammia}}, \bibnamefont{and} \bibinfo{author}{\bibfnamefont{S.}~\bibnamefont{Wehner}}, \bibinfo{journal}{Phys. Rev. A} \textbf{\bibinfo{volume}{100}}, \bibinfo{pages}{032304} (\bibinfo{year}{2019}), \urlprefix\url{https://doi.org/10.1103/PhysRevA.100.032304}.

\bibitem[{\citenamefont{Wallman et~al.}(2015)\citenamefont{Wallman, Granade, Harper, and Flammia}}]{unitarity}
\bibinfo{author}{\bibfnamefont{J.}~\bibnamefont{Wallman}}, \bibinfo{author}{\bibfnamefont{C.}~\bibnamefont{Granade}}, \bibinfo{author}{\bibfnamefont{R.}~\bibnamefont{Harper}}, \bibnamefont{and} \bibinfo{author}{\bibfnamefont{S.~T.} \bibnamefont{Flammia}}, \bibinfo{journal}{New Journal of Physics} \textbf{\bibinfo{volume}{17}}, \bibinfo{pages}{113020} (\bibinfo{year}{2015}), \urlprefix\url{https://doi.org/10.1088/1367-2630/17/11/113020}.

\bibitem[{\citenamefont{Nielsen and Chuang}(2010)}]{nielsen2010quantum}
\bibinfo{author}{\bibfnamefont{M.~A.} \bibnamefont{Nielsen}} \bibnamefont{and} \bibinfo{author}{\bibfnamefont{I.~L.} \bibnamefont{Chuang}}, \emph{\bibinfo{title}{Quantum computation and quantum information}} (\bibinfo{publisher}{Cambridge university press}, \bibinfo{year}{2010}), \urlprefix\url{https://doi.org/10.1017/CBO9780511976667}.

\bibitem[{\citenamefont{Streltsov et~al.}(2015)\citenamefont{Streltsov, Singh, Dhar, Bera, and Adesso}}]{streltsov2015measuring}
\bibinfo{author}{\bibfnamefont{A.}~\bibnamefont{Streltsov}}, \bibinfo{author}{\bibfnamefont{U.}~\bibnamefont{Singh}}, \bibinfo{author}{\bibfnamefont{H.~S.} \bibnamefont{Dhar}}, \bibinfo{author}{\bibfnamefont{M.~N.} \bibnamefont{Bera}}, \bibnamefont{and} \bibinfo{author}{\bibfnamefont{G.}~\bibnamefont{Adesso}}, \bibinfo{journal}{Phys. Rev. Lett.} \textbf{\bibinfo{volume}{115}}, \bibinfo{pages}{020403} (\bibinfo{year}{2015}), \urlprefix\url{https://doi.org/10.1103/PhysRevLett.115.020403}.

\bibitem[{\citenamefont{Heinrich et~al.}(2023)\citenamefont{Heinrich, Kliesch, and Roth}}]{heinrich2022randomized}
\bibinfo{author}{\bibfnamefont{M.}~\bibnamefont{Heinrich}}, \bibinfo{author}{\bibfnamefont{M.}~\bibnamefont{Kliesch}}, \bibnamefont{and} \bibinfo{author}{\bibfnamefont{I.}~\bibnamefont{Roth}}, \emph{\bibinfo{title}{Randomized benchmarking with random quantum circuits}} (\bibinfo{year}{2023}), \eprint{2212.06181}, \urlprefix\url{https://doi.org/10.48550/arXiv.2212.06181}.

\end{thebibliography}

\end{document}